\documentclass{article}
\usepackage{graphicx} 

\usepackage{amsmath, amsfonts, amssymb, amsthm}

\usepackage{CJKutf8}
\usepackage{fullpage}

\usepackage{graphicx}
\graphicspath{{figs/}}
\usepackage{amsmath,amssymb,amsfonts, amsthm}
\usepackage{xcolor}
\usepackage{xspace}
\usepackage{thm-restate}
\usepackage{mathtools}
\usepackage{algorithmicx} 

\usepackage[colorlinks,linkcolor=blue,filecolor=blue,citecolor=blue,urlcolor=blue,pagebackref]{hyperref}

\usepackage{soul}

\usepackage{algorithm}
\usepackage{algpseudocode}
\usepackage{bbold}

\algtext*{EndWhile}
\algtext*{EndIf}
\algtext*{EndFor}
\algrenewcommand\algorithmicrequire{\textbf{Input:}}
\algrenewcommand\algorithmicensure{\textbf{Output:}}

\usepackage[colorinlistoftodos,textsize=tiny,textwidth=2cm,color=red!25!white,obeyFinal]{todonotes}

\usepackage{enumitem}
\setlist{leftmargin=*, topsep=3pt, itemsep=3pt}

\newcommand{\R}{\mathbb{R}}
\newcommand{\Z}{\mathbb{Z}}
\newcommand{\cost}{\mathrm{cost}}

\newcommand{\sfd}{\mathsf{d}}

\newtheorem{theorem}{Theorem}[section]
\newtheorem{lemma}[theorem]{Lemma}
\newtheorem{definition}[theorem]{Definition}

\newtheorem{coro}[theorem]{Corollary}

\DeclareMathOperator*{\E}{{\mathbb{E}}}

\newcommand{\calI}{{\mathcal{I}}}

\newcommand{\opt}{{\mathrm{opt}}}

\allowdisplaybreaks

\title{New Convex Programming Technique for Nash Social Welfare and Scheduling}

\newcommand{\authorspace}{5em}
\author{
    \hspace{\authorspace}
    Yuda Feng\thanks{Nanjing University. \texttt{yudafeng@smail.nju.edu.cn}.}
    \and
    Weijiang Hu\thanks{Nanjing University. \texttt{huweijiang@smail.nju.edu.cn}.}
    \and
    Shi Li\thanks{Nanjing University. \texttt{shili@nju.edu.cn}.}
    \hspace{\authorspace}
}
\date{}

\begin{document}

\maketitle

\begin{abstract}
    We propose a new convex programming relaxation for the weighted Nash social welfare (NSW) problem that achieves a matching \((e^{1/e}\approx 1.445)\)-approximation via the rounding algorithm of Feng and Li. Unlike the exponential-size configuration LP used in prior work, our formulation can be converted into a compact linear program of polynomial size, incurring only an additive loss of \(\ln(1+\epsilon)\) in the objective. This allows the program to be solved directly using standard LP solvers, without the ellipsoid method or dual separation oracles.

    In the unweighted case, we show that our convex program is equivalent to the restricted-spending Fisher market convex program of Cole and Gkatzelis, yielding a constructive proof that its integrality gap is exactly \(e^{1/e}\). With a minor modification, our analysis also gives a simple proof of the \(e^{1/e}\) EF1 gap for the identical agent setting. Finally, we show that our convex programming technique extends to two unrelated machine scheduling problems, recovering the best-known approximation ratios with simpler analyses.
\end{abstract}

\section{Introduction}

We study the problem of allocating a set $M$ of $m$ indivisible items to a set $N$ of $n$ agents so as to maximize the \emph{weighted Nash social welfare (NSW)} of the allocation. Each agent $i\in N$ is associated with a weight $w_i>0$, with $\sum_{i\in N}{w_i} = 1$. Each item $j\in M$ has a value $v_{ij}\in \R_{\geq 0}$ for agent $i\in N$. The goal is to find an allocation $\rho:M\to N$ of items to agents that maximizes the weighted Nash Social Welfare, defined as the weighted geometric mean of agents' valuations:
\begin{align*}
    \prod_{i\in N}{v_i\left(\rho^{-1}(i)\right)^{w_i}},
\end{align*}
where $v_i\left(\rho^{-1}(i)\right) = \sum_{j\in \rho^{-1}(i)}{v_{ij}}$ denotes the value of the bundle allocated to $i$, according to $i$'s valuation. When $w_i = 1/n$ for every agent $i\in N$, we call the problem the unweighted NSW problem. 

Allocating indivisible resources among agents with heterogeneous preferences is a central problem at the intersection of theoretical computer science, game theory, and economics~\cite{daglib/0017737, daglib/0017730, choice/2016, daglib/0017734, daglib/0017738, sp/R2016, daglib/0017736}. Among the many objective functions that have been proposed to capture both efficiency and fairness, the Nash Social Welfare objective stands out for providing a smooth trade-off between utilitarian and egalitarian goals, with applications in bargaining theory~\cite{CM10, LV07, Tho86}, water allocation~\cite{DWL+18, HdLGY14}, and climate agreements~\cite{YvIW17}.

The unweighted NSW problem was already proven to be NP-hard by Nguyen, Nguyen, Roos and Rothe~\cite{aamas/NguyenNRR14}, and APX-hard by Lee~\cite{ipl/Lee17}. Recently, Garg, Hoefer and Mehlhorn~\cite{mor/Garg0M24} improved the hardness of approximation to $\sqrt{8/7} \approx 1.069$.

On the algorithmic side, Cole and Gkatzelis~\cite{siamcomp/ColeG18} introduced a convex program whose optimum solution captures a Fisher market equilibrium with restricted spending on individual items. They designed an efficient rounding algorithm for their convex program with an approximation ratio of $(2e^{1/e} + \epsilon \approx 2.889 + \epsilon)$-approximation.  The approximation ratio was subsequently improved by Cole, Devanur, Gkatzelis, Jain, Mai, Vazirani and Yazdanbod~\cite{sigecom/0001DGJMVY17} to $2$. Independently, Anari, Gharan, Saberi and Singh~\cite{innovations/AnariGSS17} developed an $e$-approximation for the problem using a convex program relaxation based on real stable polynomials. 

The current best-known approximation factor for the problem is $e^{1/e}$ due to Barman, Krishnamurthy and Vaish~\cite{ec/BarmanKV18}. They showed that for an unweighted NSW instance with \emph{identical} agents, 
any EF1 (envy-free up to one item) allocation is $(e^{1/e} \approx 1.445)$-approximate, and then converted the ratio into the approximation ratio for the unweighted NSW problem (where agents are not necessarily identical). We refer to this worst-case approximation factor of EF1 allocations as the EF1 gap.

For the more general weighted NSW problem, Brown, Laddha, Pittu and Singh~\cite{soda/BrownLPS24} presented a $5 \cdot \exp{\left( 2D_{\text{KL}}(w || \frac{\vec{1}}{n}) \right)} = 5 \cdot \exp{(2\log{n} + 2\sum_{i\in N}{w_i \log{w_i}})}$ approximation algorithm, which is a super-constant when the weight vector is far from uniform. It was an open problem to design a constant approximation for weighted NSW. This was solved by Feng and Li~\cite{icalp/FengLi24, theoretics:14642}, who gave an $(e^{1/e} + \epsilon)$-approximation for the problem, matching the best approximation ratio for the unweighted version of the problem. Their approach is based on a natural configuration LP with logarithm of weighted Nash social welfare as the objective, and the Shmoys-Tardos rounding algorithm for the unrelated machine makespan minimization problem \cite{shmoys1993approximation}.

\subsection{Our Results}

Our main contribution is a new convex program for the weighted Nash social welfare (NSW) problem, which also achieves an \(e^{1/e}\)-approximation using the rounding algorithm of Feng and Li \cite{theoretics:14642}. In contrast to the exponential-size configuration LP of \cite{theoretics:14642}, our convex program can be converted into a compact linear program of polynomial size, incurring only an additive loss of \(\ln(1+\epsilon)\) in the objective value. So, it can be solved directly with the loss using standard LP solvers, without resorting to the ellipsoid method and a dual separation oracle.

\begin{theorem}
    \label{thm:rounding}
    The convex program CP\eqref{CP:NSW}, where $f_i$'s are defined as in \eqref{equ:f-NSW}, has an integrality gap of at most $e^{1/e}$. Moreover, there is an efficient rounding algorithm that, given a solution $x$ to CP\eqref{CP:NSW}, outputs an allocation with weighted NSW value at least $e^{-1/e}$ times the value of $x$. 
\end{theorem}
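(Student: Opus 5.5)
The plan is to reduce Theorem~\ref{thm:rounding} to a per-agent guarantee for the Shmoys--Tardos-style rounding of Feng and Li~\cite{theoretics:14642}. We work with logarithms throughout. The value of a fractional solution $x$ of CP\eqref{CP:NSW} is $\exp\bigl(\sum_i w_i f_i(x_i)\bigr)$, where $x_i=(x_{ij})_{j\in M}$ and $f_i$ is as in \eqref{equ:f-NSW}. Since $\sum_i w_i=1$, it suffices to produce an allocation $\rho$ with
\begin{align*}
\sum_{i\in N} w_i \ln v_i\bigl(\rho^{-1}(i)\bigr) \;\ge\; \sum_{i\in N} w_i f_i(x_i) - \frac1e .
\end{align*}
The integrality gap bound then follows by applying this to an optimal $x$. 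We also need that CP\eqref{CP:NSW} is a relaxation: any integral allocation $\rho$ gives a feasible $x$ with $f_i(x_i)\le \ln v_i(\rho^{-1}(i))$, or with equality. This should be immediate from the definition of $f_i$ on $0/1$ vectors.

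\textbf{Rounding.} Given $x$, for each agent $i$ sort the items in nonincreasing order of $v_{ij}$. Cut the fractional mass $\sum_j x_{ij}$ into consecutive unit-size buckets $B_{i1},B_{i2},\dots$ (the last may be partial), splitting an item across two buckets when needed. This yields a fractional matching between buckets and items with every bucket of size at most $1$. Decompose it (Birkhoff--von Neumann / bipartite matching polytope) into a distribution over integral matchings in which every item is matched exactly once. This requires $\sum_i x_{ij}=1$ for every item, which I expect is a constraint of CP\eqref{CP:NSW}. Sample one matching and give each agent the items matched to its buckets.

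The Shmoys--Tardos structure guarantees the following. The item that agent $i$ receives from bucket $B_{ik}$ has value at least the smallest value in $B_{ik}$, which is at least the largest value in $B_{i,k+1}$. The marginal probability that $j$ is assigned through $B_{ik}$ equals the portion of $x_{ij}$ placed in that bucket.

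\textbf{Key per-agent lemma (the main obstacle).} Fix an agent $i$ and let $S_i$ be its random bundle. I would show that
\begin{align*}
\E\bigl[\ln v_i(S_i)\bigr] \ge f_i(x_i)-\frac1e .
\end{align*}
This is where the specific form of $f_i$ in \eqref{equ:f-NSW} enters. Presumably $f_i$ is defined as (or is upper bounded by) the best expected log-value of a distribution of bundles whose marginals are $x_i$, arranged so that the top mass is "spread" over the highest-valued items. The argument I would try first mirrors Feng--Li. Write $v_i(S_i)$ as a sum of bucket contributions and lower bound it by a sum over buckets of the form $Y_1+\sum_{k\ge2}\underline v_{k}$, where $Y_1$ is the random item from the first bucket and $\underline v_k$ is the smallest value in bucket $k$.

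Compare this against $f_i$ by exchanging the large items in the first bucket. After normalization, the worst case becomes a one-dimensional problem: the first-bucket item has value $1/p$-type mass with probability $p$, and the remaining buckets contribute deterministic mass. The loss in $\E[\ln]$ is then bounded by $\max_p\bigl(-p\ln p\bigr)=1/e$, which is exactly the source of $e^{1/e}$. Concavity of $\ln$ (Jensen within each bucket) handles the deterministic part. The delicate point is to make sure the definition of $f_i$ does not credit agent $i$ with more than this "spread" structure allows. If \eqref{equ:f-NSW} is stated via sorted prefix sums, I would prove the inequality by an integral/averaging argument over the threshold $\theta\in[0,1]$ inside the first bucket.

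\textbf{From expectation to an allocation.} Summing the per-agent lemma with weights $w_i$ gives
\begin{align*}
\E\Bigl[\sum_i w_i\ln v_i(S_i)\Bigr]\ge \sum_i w_i f_i(x_i)-\frac1e ,
\end{align*}
so some matching in the support achieves the bound. Since the decomposition has polynomially many matchings (Carath\'eodory), we can evaluate all of them and output the best one, which makes the algorithm efficient. Exponentiating gives an allocation with weighted NSW at least $e^{-1/e}$ times the value of $x$, proving both claims of Theorem~\ref{thm:rounding}.
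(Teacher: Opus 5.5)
There is a genuine gap. The per-agent inequality $\E[\ln v_i(S_i)]\ge f_i(x_i)-\frac1e$ is the whole content of the theorem, and the route you sketch for it fails.

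The pointwise bound $v_i(S_i)\ge Y_1+\sum_{k\ge 2}\underline v_k$ is not valid for the last, partially filled bucket, which may receive no item. Even over full buckets it is too lossy. Take $x_{ij_1}=1$, $x_{ij_2}=1-\delta$, $x_{ij_3}=\delta$ with $v_{ij_1}=v_{ij_2}=1$ and $v_{ij_3}=\epsilon$. Then $Y_1\equiv 1$ and the second bucket is full with $\underline v_2=\epsilon$, so your bound gives only $\ln(1+\epsilon)$. On the other hand, $h_i(x_i)=2-\delta+\delta\epsilon$, so $f_i(x_i)=\ln(2-\delta+\delta\epsilon)\to\ln 2>\frac1e$ as $\delta,\epsilon\to 0$. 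Since $Y_1$ is already deterministic here, no later exchange argument or one-dimensional optimization can recover the loss.

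Your reading of $f_i$ is also off. It is not bounded by the best expected log-value of a bundle distribution with marginals $x_i$ (the configuration-LP value). It dominates that value, because all mass below the top unit is treated as divisible. So a Feng--Li-type bound stated against the configuration LP cannot be plugged in directly either.

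The missing idea is the paper's reduction to the EF1 gap:
\begin{enumerate}
\item Fix $i$ and view the output distribution as an allocation $\sigma$ of $\Delta x_{ij}$ copies of each item among $\Delta$ identical copies of $i$. Feng--Li showed this $\sigma$ is EF1.
\item Liquidize every item copy except the $\Delta$ largest ones; each copy of $i$ receives exactly one of these, from the first bucket. This preserves the value and the EF1 property of $\sigma$, and can only increase the optimum.
\item In the liquidized instance, an optimal allocation gives each copy one solid item and spreads the liquid value by water-filling. Its log-NSW is therefore $\frac1\Delta\sum\ln\max\{v_{ij},h_i(x_i)\}$ over the solid items, which is exactly $f_i(x_i)$. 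This is the step where the definition \eqref{equ:f-NSW} is actually used.
\item Theorem~\ref{thm:EF1-gap} then gives $\E[\ln v_i(\rho^{-1}(i))]\ge f_i(x_i)-\frac1e$.
\end{enumerate}
Your intuition that $\frac1e=\max_p(-p\ln p)$ is right. However, there $p=\psi/h$, where $\psi$ is the minimum bundle value among the copies. The comparison works only because EF1 forces each copy's value outside its largest item to be at most $\psi$, and deterministic per-bucket lower bounds do not capture this. The remaining steps of your outline (summing with weights, Jensen, and derandomizing over a polynomial-size decomposition) are fine.
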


More interestingly, we show that in the unweighted case, our program is equivalent to the convex program CP\eqref{CP:fSR} introduced by Cole and Gkatzelis \cite{sigecom/0001DGJMVY17} that captures the Fisher market equilibrium with restricted spending. The program is defined later in Section~\ref{sec:CP-fSR}. This yields a constructive proof that the integrality gap of this convex program is at most \(e^{1/e}\). Combined with the \(e^{1/e}\) lower bound by \cite{sigecom/0001DGJMVY17}, this shows that the integrality gap of CP\eqref{CP:fSR} is exactly \(e^{1/e}\).

\begin{theorem}
    \label{thm:equivalence}
    For the unweighted Nash social welfare problem, the value of CP\eqref{CP:NSW} is the same as that of CP\eqref{CP:fSR}.
\end{theorem}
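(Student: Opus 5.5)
The plan is to prove the equality by convex duality, not by building solutions directly: I will show that the two programs share a common dual, or equivalently that an optimal solution of one yields a feasible solution of the other with the same objective value, and conversely. Concretely, I specialise CP\eqref{CP:NSW} to $w_i = 1/n$ and write out $f_i$ from \eqref{equ:f-NSW} explicitly for a fractional bundle $x_i = (x_{ij})_{j\in M}$. I expect this to give a ``capped logarithmic'' form: the high-value items of $i$ contribute through an $x_{ij}\ln(\cdot)$-type term, and the low-value items are aggregated into a single linear budget term. I then put this next to CP\eqref{CP:fSR}, whose variables are spending amounts $b_{ij}$, with total budget $1$ per agent (after the $1/n$ normalisation), item prices $p_j$, and a cap on the money that can be spent on any single item.

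The key steps, in order, are as follows.

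\textbf{(i) The map between the two programs.} I set $x_{ij} = b_{ij}/p_j$ when $p_j$ is below the spending cap. For items whose price hits the cap, $x_{ij}$ is the share of the capped portion that agent $i$ receives. Under this map the fractional-assignment constraints $\sum_i x_{ij}\le 1$ correspond to the price and spending constraints of CP\eqref{CP:fSR}.

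\textbf{(ii) Inequality $\mathrm{CP}\eqref{CP:fSR} \le \mathrm{CP}\eqref{CP:NSW}$.} I take an optimal (equilibrium) solution of CP\eqref{CP:fSR} and use its KKT conditions. These say that each agent spends only on maximum bang-per-buck items, and that an item is either fully sold at price below the cap or is capped. From these conditions I show that the induced $x$ satisfies $\frac1n\sum_i f_i(x_i)\ge$ the value of CP\eqref{CP:fSR}. The point is that for each agent $i$ the bang-per-buck ordering is exactly the ordering by $v_{ij}$, which is the ordering used to define $f_i$. So $f_i(x_i)$ can be evaluated in closed form and compared term by term.

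\textbf{(iii) The reverse inequality.} I write the Lagrangian dual of CP\eqref{CP:NSW}, with a multiplier $p_j$ for each item constraint $\sum_i x_{ij}\le 1$. Using the concave-conjugate form of each $f_i$, I show that this dual is the same function of $p$ as the dual of CP\eqref{CP:fSR}; equivalently, the prices from step (ii) certify optimality of the induced $x$. Strong duality holds for both programs because they are convex with linear constraints and have a feasible interior, and this gives equality of the optimal values.

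I expect the main obstacle to be the conjugate computation in step (iii): showing that $\sup_{x_i}\{\tfrac1n f_i(x_i) - \sum_j p_j x_{ij}\}$ equals exactly the per-agent term that appears in the dual of CP\eqref{CP:fSR}. This is where the spending cap and the piecewise definition of $f_i$ have to line up. My first attempt will be a greedy argument. For fixed prices $p$, the maximiser buys items in decreasing order of $v_{ij}/p_j$. The cap in CP\eqref{CP:fSR} should correspond to the breakpoint in $f_i$ where an item's value exceeds the average value of the remaining bundle. I will then check that both sides give the same piecewise expression in $p$.

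If this matching turns out to be messy, the fallback is to avoid the dual entirely. Instead I would verify directly that the KKT system of CP\eqref{CP:NSW} with $w_i=1/n$ becomes, after the substitution from step (i), exactly the restricted-spending equilibrium conditions. Since both programs are convex, KKT points are optimal, so this also gives equal optimal values.
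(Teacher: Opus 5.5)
\paragraph{Verdict: there is a genuine gap.} The inequality ``value of CP\eqref{CP:NSW} $\le$ value of CP\eqref{CP:fSR}'' is never actually proved; you need that no feasible $x$ of CP\eqref{CP:NSW} beats the equilibrium value.

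\paragraph{Where step (iii) fails.} Step (iii) only announces that a conjugate computation will line up. Its one concrete claim, that the prices from step (ii) certify optimality of the induced $x$, is false in general. Work in the scaling of Lemma~\ref{lemma:pricing}, where the induced $x$ has $h_i(x_i)=1$. The envelope formula from Lemma~\ref{lemma:fx} gives $\partial f_i/\partial x_{ij}=v_{ij}$ if $v_{ij}\le 1$, but $1+\ln v_{ij}$ if $v_{ij}>1$. So an agent with $|x_i|_1>1$ who spends on a capped item ($v_{ij}=p_j>1$) has marginal value $1+\ln p_j<p_j$ there. Stationarity with multiplier $p_j/n$ then fails. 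This already happens with two agents who share an item worth $4$ to both, each also having a private item worth $1$; after scaling, the prices are $2$ and $\frac12$. So neither your dual route nor your KKT fallback, as described, proves this direction.

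\paragraph{Remark on step (ii).} Step (ii) does match the paper's argument for the other direction. What drives it, however, is not a bang-per-buck ordering but the identity $\sum_j\min\{v_{ij},1\}x_{ij}=\sum_j q_jx_{ij}=\sum_j b_{ij}=1$, i.e.\ $h_i(x_i)=1$. Thus the water level of $f_i$ is exactly the spending cap, and this, not an ``average value of the remaining bundle'', is the breakpoint you were looking for. Consequently the above-water items are exactly the capped ones, and $f_i(x_i)=\sum_j b_{ij}\ln(v_{ij}/q_j)$.

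\paragraph{The missing idea (the paper's route).} The paper uses a direct primal map that needs no duality. Take any feasible $x$ and scale each $v_i$ by some $c_i$ so that $h_i(x_i)=1$; both objectives shift by the same $\frac1n\sum_i\ln c_i$. Set $b_{ij}=x_{ij}\min\{v_{ij},1\}$ and $q_j=\sum_i b_{ij}$. Then:
\begin{itemize}
\item $\sum_j b_{ij}=1$ is exactly the defining equation of $h_i(x_i)$;
\item $q_j\le\sum_i x_{ij}=1$;
\item Jensen's inequality for each item, with weights $b_{ij}/q_j$ and using $\sum_i \frac{b_{ij}}{q_j}\cdot\frac{1}{\min\{v_{ij},1\}}=\frac1{q_j}$, gives $\sum_i b_{ij}\ln\min\{v_{ij},1\}\ge q_j\ln q_j$.
\end{itemize}
Since $b_{ij}=x_{ij}$ whenever $v_{ij}>1$, the CP\eqref{CP:fSR} objective of $(b,q)$ is at least $\frac1n\sum_i f_i(x_i)$.

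\paragraph{Repairing the certificate argument.} If you prefer a certificate argument, it works once you use the right multipliers. In the scaling of Lemma~\ref{lemma:pricing}, set $\tilde p_j:=q_j+\ln(p_j/q_j)$, which is $p_j$ on uncapped items and $1+\ln p_j$ on capped ones. Take $h=1$ in Lemma~\ref{lemma:fx} and use $v_{ij}\le p_j$. This gives $f_i(x_i)\le g_i(x_i,1)\le\sum_j\tilde p_jx_{ij}-1$ for every feasible $x$. Summing over $i$, with $\sum_i x_{ij}=1$ and $\sum_j q_j=n$, yields $\sum_i f_i(x_i)\le\sum_{j:p_j>1}\ln p_j$. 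This is exactly $n$ times the optimum of CP\eqref{CP:fSR}, and no strong duality is needed.
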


\begin{coro}
	The integrality gap of both CP\eqref{CP:NSW} and CP\eqref{CP:fSR} is precisely $e^{1/e}$. 
\end{coro}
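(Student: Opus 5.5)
The corollary combines an upper bound and a matching lower bound, one for each of the two programs. I will prove it in three steps.

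\emph{Upper bound for CP\eqref{CP:NSW}.} This is immediate from Theorem~\ref{thm:rounding}. The rounding algorithm turns any feasible solution $x$, and in particular an optimal one, into an integral allocation whose weighted NSW is at least $e^{-1/e}$ times the value of $x$. So the integrality gap of CP\eqref{CP:NSW} is at most $e^{1/e}$. This holds for all weights, and hence also in the unweighted case.

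\emph{Upper bound for CP\eqref{CP:fSR}.} CP\eqref{CP:fSR} is defined only for the unweighted problem. There, Theorem~\ref{thm:equivalence} says its optimal value equals that of CP\eqref{CP:NSW}. Fix an unweighted instance. The optimum of CP\eqref{CP:fSR} equals the optimum of CP\eqref{CP:NSW}, which is at most $e^{1/e}$ times the integral optimum. Hence the gap of CP\eqref{CP:fSR} is also at most $e^{1/e}$. One detail needs care: both programs must be compared on the same scale. If one objective is in log space or normalized by the weights $w_i=1/n$, I will state explicitly how the objective maps to a Nash welfare value before comparing it to the integral NSW.

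\emph{Lower bound.} I will use the instance of Cole et al.~\cite{sigecom/0001DGJMVY17}, which shows that the gap of CP\eqref{CP:fSR} is at least $e^{1/e}$. More precisely, it gives a family of unweighted instances whose ratio of CP\eqref{CP:fSR} value to integral optimum tends to $e^{1/e}$. On unweighted instances the two programs have equal optimal values by Theorem~\ref{thm:equivalence}, so the same family gives the same ratio for CP\eqref{CP:NSW}. Combined with the upper bounds, both gaps equal $e^{1/e}$, where the gap is understood as a supremum that may only be approached asymptotically.

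The main obstacle is checking that the lower-bound construction from the literature measures the gap in the same sense used here, namely the program value compared to the best integral NSW, with the program value exponentiated if it is stated in log space. If instead that construction certifies only a gap relative to some rounding or market-equilibrium quantity, I would recompute the bound directly. The natural instance has $n$ identical agents and items of suitably chosen values, with fractional spending capped per item, and one checks that the fractional value exceeds every integral allocation by a factor tending to $e^{1/e}$. I would evaluate CP\eqref{CP:NSW} on it through the functions $f_i$ of \eqref{equ:f-NSW}.
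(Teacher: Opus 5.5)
Your proposal is correct and follows the paper's argument. Theorem~\ref{thm:rounding} gives the upper bound for CP\eqref{CP:NSW}, and Theorem~\ref{thm:equivalence} transfers both that upper bound and the $e^{1/e}$ lower bound of \cite{sigecom/0001DGJMVY17} between CP\eqref{CP:NSW} and CP\eqref{CP:fSR} on unweighted instances; your caution about comparing the programs on the same (log versus NSW) scale is reasonable but harmless, since both objectives are stated as the logarithm of NSW.
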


With a slight modification of our analysis, we obtain a simple proof of the \(e^{1/e}\) EF1 gap for the unweighted NSW problem with identical agents, which was proved by Barman, Krishnamurthy and Vaish~\cite{ec/BarmanKV18}. Our proof avoids the sequence of adjustment steps. 

Finally, we demonstrate that our convex programming technique naturally extends to two unrelated machine scheduling problems: one with the objective of minimizing the \(L_q\) norm of machine loads, and another with the objective of minimizing weighted completion time when all jobs have the same Smith ratio. The known results \cite{soda/ImLi23, Ola2017unrelated} for both problems use configuration LPs and the rounding algorithm of \cite{shmoys1993approximation}. Similarly, we show that our simple convex programs are sufficient to recover the same approximation ratios as proved in \cite{soda/ImLi23, Ola2017unrelated}. In particular, for the $\frac{1+\sqrt{2}}{2}$ approximation ratio for the weighted completion time problem in \cite{Ola2017unrelated}, we obtain a simpler analysis.

\subsection{Overview of Our Techniques}

As in \cite{theoretics:14642}, we take the logarithm of the weighted Nash social welfare as the objective of our convex program relaxation.  Specifically, we design a concave function $f_i:[0, 1]^M \to \R_{\geq 0}$ for each agent $i \in N$, and aim to maximize $\sum_{i \in N} w_i f_i(x_i)$, where $x_i \in [0, 1]^M$ denotes the fractional allocation of items to agent $i$. 

The strongest choice for $f_i$ is the concave closure of the discrete function mapping $x_i \in \{0, 1\}^M$ to $\ln\sum_jv_{ij}x_{ij}$. This corresponds to the configuration LP introduced by Feng and Li \cite{theoretics:14642}, where $f_i(x_i)$ is defined as the maximum of $\sum_{S \subseteq M}\alpha_S \ln v_i(S)$ over all convex combination decompositions $\sum_{S \subseteq M} \alpha_S \chi^S$ of $x_i$, where $\chi^S \in \{0, 1\}^M$ is the indicator vector for $S$. 

On the opposite side, if all items are assumed to be infinitely divisible, allowing configuration to contain portions of items, then $f_i(x_i)$ is maximized when all configurations have the same value. In this case, the function simplifies to $f_i(x_i) = \ln \sum_{j} v_{ij}x_{ij}$, which preserves the original formula but extends its domain to $[0, 1]^M$. 

Our approach lies between these two extremes. We only treat the largest 1 fractional item of maximum value $v_{ij}$ in $x_i$ as indivisible, while all remaining items are treated as divisible. Under this assumption, the optimal decomposition of items into configurations is given as follows: each configuration contains exactly one indivisible item, and the divisible items are allocated via a water-filling procedure. Then $f_i$ is defined as the average logarithm of the value of these configurations. We show that our $f_i$ is concave by expressing it as the minimum of linear functions.  This is sufficient to recover the $e^{1/e}$ approximation factor using the rounding algorithm of \cite{theoretics:14642}, as in the worst instance, only the largest 1 fractional item is indivisible. 

To show the equivalence of our convex program and that of \cite{sigecom/0001DGJMVY17} via market equilibrium, we establish a correspondence between whether items that are covered by water in the water filling procedure defining $f_i$'s, and items whose market prices are at most 1. Thanks to the uniform weight vector, this classification of items is the same for all agents. This uniformity is why the convex program of \cite{sigecom/0001DGJMVY17} applies only to the unweighted case.

In both the analysis of our rounding algorithm and our simplified proof of the EF1 gap, we define a liquidization operation that converts an item into arbitrarily small items with the same total value, this operation preserves total values but allows us to treat the liquidized items as divisible in the analysis. In the EF1-gap proof, we focus on the agent receiving the smallest value, say $\psi$. We then liquidize items of total value exactly  $\psi$ for every agent, leaving only a portion of the largest item unliquidized. This transformation can only increase the gap. In the optimal solution, the liquidized items are allocated using the water-filling procedure. The remaining task of analyzing the gap between our solution and the optimal one is straightforward.

\subsection{Other Related Work}
A more general function family that has been studied for the NSW problem is submodular functions. In contrast to the additive valuation function setting, each agent $i\in N$ is associated with a monotone submodular valuation function $v_i : 2^M \to \R_{\geq 0}$, and the goal remains to find an allocation $\rho:M\to N$ that maximizes $\prod_{i\in N}{v_i\left(\rho^{-1}(i)\right)^{w_i}}$. As in the additive case, we have the unweighted submodular NSW problem, where all agents have weights $1/n$, and the weighted submodular NSW problem, where the weight vector $w$ can be any vector in $[0, 1]^N$ satisfying $\sum_{i \in N}w_i = 1$. 

In the unweighted case, the current best-known inapproximability for submodular NSW is $e/(e-1) \approx 1.582$, proved by Garg, Kulkarni and Kulkarni~\cite{talg/GargKK23}, even when the number of agents is a constant.

Algorithmically, the first $O(1)$-approximation for unweighted submodular NSW was obtained by Li and Vondrák~\cite{soda/LiV21} via convex programming, with an approximation ratio of $e^3 / (e-1)^2$. Subsequently, Garg, Husic, Li, Végh and Vondrák~\cite{stoc/GargHLVV23} introduced an elegant local-search method that improved the approximation ratio to $4$ for the unweighted case. The same framework also yields an $O(nw_{\max})$-approximation for the weighted case~\cite{stoc/GargHLVV23}, where $w_{\max} = \max_{i\in N}{w_i}$. 

Whether one can obtain an $O(1)$-approximation remained open. This was resolved by Feng, Hu, Li and Zhang~\cite{stoc/FHLZ25}, who developed a $233$-approximation based on the configuration LP from \cite{theoretics:14642}, giving the first constant-factor approximation for the weighted case. Very recently, Bei, Feng, Hu, Li and Zhang~\cite{BFHLZ25} significantly advanced this line of work by presenting a $3.56$-approximation, simultaneously improving the best-known ratios for both weighted and unweighted submodular NSW. They solve the same configuration LP via a stronger separation oracle that loses an $e/(e-1)$ factor only on small items, and then round the solution using a new bipartite multigraph construction to achieve the improved constant. 

Throughout, we shall simply refer to the NSW problems with additive valuations as unweighted/weighted NSW problems, as they are our focuses of the paper. 
\smallskip


For the unrelated machine scheduling problem with the objective of minimizing the $L_q$ norm of machine loads ($q \geq 1$), Im and Li \cite{soda/ImLi23} obtained an approximation ratio of $O_q(1)$-approximation algorithm, using a time-index LP relaxation and the Shmoys-Tardos rounding algorithm \cite{shmoys1993approximation}. This framework improves upon some earlier results \cite{awerbuch1995load, azar2005convex, kumar2009unified}. In particular, their approximation ratio for $q = 2$ is $\sqrt{4/3} \leq 1.155$.  

The unrelated machine weighted completion time problem has been extensively studied in the literature. Following a sequence of work~\cite{Har24, soda/ImLi23, IS20}, Li \cite{soda/000125} gave the current best approximation ratio of $1.36 + \epsilon$ for the problem. 
The case we are interested is when all jobs have the same Smith ratios across all machines, 
for which Kalaitzis, Svensson and Tarnawski \cite{Ola2017unrelated} gave a $\frac{1+\sqrt{2}}{2}$-approximation. This is also based on the Shmoys-Tardos rounding algorithm, along with the configuration LP relaxation for the problem. 

\paragraph{Organizations}
The rest of the paper is organized as follows. In Section~\ref{sec:prelim}, we introduce the problem definitions, notation and other preliminaries for NSW problem and the related scheduling problems. In Section~\ref{sec:CP}, we present our main convex program, and show the equivalence of CP\eqref{CP:NSW} and CP\eqref{CP:fSR} in unweighted case, proving Theorem~\ref{thm:equivalence}. In Section~\ref{sec:rounding}, we analyze the rounding algorithm of CP\eqref{CP:NSW} and prove Theorem~\ref{thm:rounding}. In Section~\ref{sec:EF1-gap}, we provide a simple proof of the $e^{1/e}$ EF1 gap. Finally Section~\ref{sec:scheduling}, we apply our convex programming technique to two scheduling problems.
\section{Preliminaries}
\label{sec:prelim}

\subsection{Problem Definitions}
\label{sec:problem-definitions}

For the weighted additive Nash Social Welfare (NSW) problem, it is convenient for us to avoid agent-item pairs $ij$ with $0$ value. So in our definition of the problem, we are given a set $N$ of $n$ agents, each $i \in N$ with a weight $w_i \in (0, 1]$ subject to $\sum_{i \in N}w_i = 1$, a set $M$ of $m$ items, a set $E \subseteq N \times M$ of edges, along with a value $v_{ij} \in \R_{>0}$ for every $ij \in E$.  We assume every $i \in N$ or $j \in M$ is incident to at least one edge in $E$. Let $M_i := \{j \in M: ij \in E\}$  for every $i \in N$ and $N_j = \{i \in N: ij \in E\}$ for every $j \in M$.  Our goal is to find an allocation $\rho:M \to N$ such  that $\rho(j)j \in E$ for every $j \in M$, so as to maximize $\prod_{i \in N}v_i(\rho^{-1}(i))^{w_i}$, where we use $v_i(M') := \sum_{j \in M'} v_{ij}$ for every $i \in N$ and $M' \subseteq M_i$.  In the unweighted Nash Social Welfare problem, we have $w_i = \frac1n$ for every $i \in N$.

We shall consider the following general unrelated machine scheduling problem. There is a set $M$ of $m$ machines, a set $J$ of $n$ jobs, a size $p_{ij} \in \R_{> 0}$ for every $i \in M$ and $j \in J$. We are given an increasing convex function $\theta:\R_{\geq 0} \to \R_{\geq 0}$ with $\theta(0) = 0$. Our goal is to find an allocation $\rho:J \to M$ so as to minimize $\sum_{i \in M} \theta(\text{load}_i)$, where $\text{load}_i := q_i(\rho^{-1}(i))$ and $q_i(J'):=\sum_{j \in J'} p_{ij}$ for every $J' \subseteq J$. We call the problem the unrelated machine scheduling problem with $\sum_{i \in M} \theta(\text{load}_i)$ objective.\footnote{We remark the notation change from NSW problems to scheduling problems. In both settings we need to allocate some objects to some players. $M$ denotes the objects (items) in NSW problems, and players (machines) in scheduling problems. We always have $m = |M|$. $n$ is the number of players ($n = |N|$) in NSW problems, and number of objects ($n = |J|$) in scheduling problems.}  When $\theta(x) = x^q$ for $q \geq 1$, the objective becomes the $L_q$ norm, after taking the $q$-th root of the objective. 

In the unrelated machine scheduling with weighted completion time objective and uniform Smith ratios, we are given $M, m, J, n$ and $(p_{ij} \in \R_{>0})_{i \in M, j \in J}$ as in the previous problem. Our goal is to find an allocation $\rho:J \to M$ so as to minimize $\frac12\sum_{i \in M} \left(p_i(\rho^{-1}(i))^2 + \sum_{j \in \rho^{-1}(i)}p_{ij}^2\right)$. When a job $j$ is allocated to $i = \rho(j)$, it has processing time $p_{ij}$ and weight $p_{ij}$. Namely, the Smith ratio of any job on any machine is always $1$. On every machine $i \in M$, the weighted completion time of scheduling $\rho^{-1}(i)$ on $i$ is $\frac12 \left(p_i(\rho^{-1}(i))^2 + \sum_{j \in \rho^{-1}(i)}p_{ij}^2\right)$. As all jobs have the same Smith ratio, any permutation of jobs is optimum.  The objective is closely related to the $L_2$ norm of machine loads.


\subsection{EF1 Allocation and EF1-Gap for Unweighted Additive NSW with Identical Agents} 
Suppose we are given an unweighted NSW instance with $n$ \emph{identical} agents $N$, and $m$ indivisible items $M$. As the valuation functions are identical, we simply use $v_j$ to denote the value of item $j$ for any agent $i \in N$.
\begin{definition}
    An allocation $\sigma: M \to N$ of items to agents is said to be envy-free up to 1 item (EF1) if for every two agents $i, i’ \in N$, we have $v(\sigma^{-1}(i)) \geq v(\sigma^{-1}(i’)) - \max\{v_j: j \in \sigma^{-1}(i’)\}$. 
\end{definition}
So in an EF1 allocation $\sigma$, for any pair $i, i'$ of agents, $i$ does not envy $i’$ once we remove the largest item in the bundle of agent $i’$.  \cite{ec/BarmanKV18} proved the following theorem:
\begin{theorem}[\cite{ec/BarmanKV18}]
    \label{thm:EF1-gap}
Consider an unweighted NSW instance with identical agents, defined by $n, N, m, M$ and $v$. Let $\sigma$ be an EF1 allocation, and $\opt$ be the optimum NSW value of the instance. Then, we have 
\begin{align*}
	\left(\prod_{i \in N} v_i(\sigma^{-1}(i))\right)^{1/n} \geq e^{-1/e} \cdot \opt.
\end{align*}
\end{theorem}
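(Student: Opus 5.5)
Our plan is to follow the liquidization idea sketched in the overview. Let $\psi := \min_{i \in N} v(\sigma^{-1}(i))$ be the smallest bundle value in the EF1 allocation $\sigma$. EF1 then gives, for every agent $i$, $v(\sigma^{-1}(i)) - \max_{j \in \sigma^{-1}(i)} v_j \le \psi$. So every bundle can be written as one ``big'' item $b_i$ (its largest item), of value $\beta_i \ge 0$, plus a remainder of total value at most $\psi$. We reorganize each bundle so that exactly value $\psi$ is liquidized, i.e. split into arbitrarily small pieces. If the remainder has value less than $\psi$, we liquidize a portion of $b_i$ as well. The result is a bundle consisting of liquid mass $\psi$ plus an unliquidized item of value $a_i := v(\sigma^{-1}(i)) - \psi \ge 0$; if $a_i>0$ this item is (part of) $b_i$, hence a single item. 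For the agent attaining the minimum, $a_i = 0$.

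The first step is to argue that this transformation does not change $\sigma$'s NSW value, since each bundle value is preserved. It can only increase $\opt$, since splitting items only enlarges the set of feasible allocations. It therefore suffices to prove the bound for the new instance: $n$ indivisible items of values $a_1,\dots,a_n$ (some possibly $0$), liquid mass $n\psi$, and $\sigma$ giving agent $i$ value $a_i+\psi$.

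The second step bounds the optimum of the new instance. We may assume each agent gets at most one indivisible item in an optimum, for two reasons. First, merging two indivisible items into one agent's bundle can be simulated. Second, by concavity of $\ln$, an exchange argument shows it is never better to stack large items while another agent holds only liquid. We would argue directly that $\opt^n \le \max$ over allocations in which agent $i$ receives $a_{\pi(i)}$ (at most one large item each) and the liquid is distributed by water-filling. With liquid mass $L = n\psi$, water-filling yields a level $h$ such that agent $i$ receives $\max(a_i,h)$, where $\sum_i \max(a_i - \cdot)$ is chosen so that $\sum_i (h - a_i)^+ = n\psi$. This needs care: if stacking items on one agent were useful, water-filling with one item per agent still dominates, because moving one item to an agent with only liquid and rebalancing the liquid increases the product. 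That is the main obstacle to state cleanly.

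The third step is the inequality itself. Let $T = \{i : a_i < h\}$ with $|T| = t$ and $h = (n\psi + \sum_{i\in T} a_i)/t$. We need
\[
\prod_{i \in N}(a_i+\psi) \ \ge\ e^{-n/e} \prod_{i \notin T} a_i \cdot h^{t}.
\]
For $i \notin T$ we have $a_i+\psi \ge a_i$, so those factors can be dropped, and it remains to show $\prod_{i\in T}(a_i+\psi) \ge e^{-n/e} h^t$. By AM–GM in the worst case with all $a_i$ equal to $a$ on $T$ (the log-sum is concave, so equalizing the $a_i$ minimizes the left side for fixed sum), this reduces to $t\ln(a+\psi) \ge t\ln\bigl(a + n\psi/t\bigr) - n/e$. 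Writing $s = n/t \ge 1$ and $u = \psi/(a+\psi)\in(0,1]$, we need $\ln(1+(s-1)u) \le s/e$. This is maximized at $u=1$, giving $\ln s \le s/e$, which holds because $\ln s \le s/e$ for all $s>0$, with equality at $s=e$. Taking $n$-th roots yields the theorem.
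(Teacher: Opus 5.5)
Your liquidization step and your water-filling description of the optimum match the paper's; your level $h$ with $\sum_i (h-a_i)^+=n\psi$ is the paper's $\frac1n\sum_i\min\{\phi_i,h\}+\psi=h$, and your exchange argument for one solid item per agent is at the same level of detail as the paper's. The gap is in your third step: the concavity argument runs in the wrong direction. For a fixed value of $\sum_{i\in T}a_i$, the function $\sum_{i\in T}\ln(a_i+\psi)$ is concave, so by Jensen equalizing the $a_i$ \emph{maximizes} it. The right-hand side $t\ln h$ depends only on that sum. So checking the equal-$a_i$ case says nothing about the general case.

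Concretely, take $n=t=2$, $\psi=1$ and $a=(0,2-\delta)$. Then $h=2-\delta/2>a_2$, so both agents lie in $T$. Here $t\ln h-\sum_{i\in T}\ln(a_i+\psi)\to\ln 4-\ln 3=\ln\frac43$. The equalized configuration $a=(1-\delta/2,1-\delta/2)$ gives a loss of exactly $0$. The bad configurations are the extreme points of $\{a\in[0,h]^T:\sum_{i\in T}a_i=th-n\psi\}$, where each $a_i$ sits at $0$ or $h$. Your computation $\ln s\le s/e$ covers only the vertex where all $a_i$ on $T$ are $0$.

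The missing idea is the chord bound the paper uses. For $a_i\in[0,h]$, concavity gives $\ln(a_i+\psi)\ge\bigl(1-\frac{a_i}{h}\bigr)\ln\psi+\frac{a_i}{h}\ln(h+\psi)$. Summing over $T$ and using $\sum_{i\in T}a_i=th-n\psi\ge 0$ (so $t-\frac{n\psi}{h}\ge 0$) yields
\[
\sum_{i\in T}\ln(a_i+\psi)\ \ge\ \frac{n\psi}{h}\ln\psi+\Bigl(t-\frac{n\psi}{h}\Bigr)\ln(h+\psi)\ \ge\ t\ln h-\frac{n\psi}{h}\ln\frac{h}{\psi}.
\]
Since $h\ge n\psi/t\ge\psi$ and $\frac{\ln y}{y}\le\frac1e$ for $y\ge1$, the last term is at most $\frac ne$. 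With this in place of your equalization step, the rest of your argument goes through.
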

The $e^{1/e}$ factor is called the EF1 gap of for the NSW problem. At a high level, the proof in \cite{ec/BarmanKV18} compares an arbitrary EF1 allocation to an optimal partially-fractional allocation via a sequence of swap operations, derives a lower bound for the EF1 allocation and an upper bound on the optimal NSW, and then optimizes these bounds to obtain the $e^{-1/e}$ guarantee. However, the argument relies on several structural lemmas and nontrivial inequalities, which makes the proof rather complicated.


\subsection{Convex Program CP\eqref{CP:fSR} of \cite{sigecom/0001DGJMVY17} for Unweighted NSW}
\label{sec:CP-fSR}
    \cite{sigecom/0001DGJMVY17} introduced a market equilibria based convex program for the unweighted additive Nash social welfare problem. The convex program, denoted as CP\eqref{CP:fSR}, is defined as follows:
    \begin{align}
        \max \qquad 
        \frac1n\left(\sum_{ij \in E}b_{ij} \ln v_{ij} - \sum_{j \in M} q_j \ln q_j\right)
        \tag{\text{f-SR}} \label{CP:fSR}
    \end{align}\vspace*{-10pt}
    \begin{align*}
        \sum_{i \in N_j} b_{ij} = q_j, \forall j \in M; \quad 
        \sum_{j \in M_i} b_{ij} = 1, \forall i \in N; \quad
        b_{ij} \geq 0, \forall ij \in E; \quad
        q_j \leq 1, \forall j \in M.
    \end{align*}

    We note that \cite{siamcomp/ColeG18} uses the NSW value as the objective, while we instead use its logarithm, for the ease of comparison with our convex program. Consider an allocation $\rho:M \to N$ for the instance. Then, we define $b_{ij} = \frac{v_{ij}}{v_i(\rho^{-1}(i))}$ if $i = \rho(j)$ and $b_{ij} = 0$ otherwise. $q_j = b_{\rho(j)j}$. 
    This gives us a valid solution to CP\eqref{CP:fSR}. The logarithm of the NSW of $\rho$ is
    \begin{align*}
        \frac1n \sum_{i \in N}\ln v(\rho^{-1}(i)) 
        &=\frac1n\sum_{i \in N} \sum_{j \in M_i} b_{ij} \ln \frac{v_{ij}}{q_j} = \frac1n\left(\sum_{ij \in E} b_{ij} \ln v_{ij} - \sum_{ij \in E}b_{ij}\ln q_j\right)\\
        &= \frac1n\left(\sum_{ij \in E} b_{ij} \ln v_{ij} - \sum_{j \in M}\ln q_j\right),
    \end{align*}
    which is exactly the objective of CP\eqref{CP:fSR}. \smallskip

    \cite{siamcomp/ColeG18} proved the following lemma:    
    \begin{lemma}
        \label{lemma:pricing}
        With a scaling of the valuation functions $v_i, i \in N$, there exists a price vector $p \in \R_{>0}^{M}$ such that the following conditions hold for the optimum solution $(b, q)$ to CP\eqref{CP:fSR}:
        \begin{itemize}
            \item $q_j = \min\{p_j, 1\}$ for every $j \in M$.
            \item $v_{ij} \leq p_j$ for every $ij \in E$.
            \item If $v_{ij} < p_j$ for some $ij \in E$, then $b_{ij} = 0$.
        \end{itemize}
    \end{lemma}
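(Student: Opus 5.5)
Lemma~\ref{lemma:pricing} is a KKT/duality statement for CP\eqref{CP:fSR}, and I would prove it from the optimality conditions. The program is concave: $-q_j\ln q_j$ is concave and the constraints are linear. A feasible point with all $b_{ij}>0$ and all $q_j<1$ exists: take $b_{ij}=\epsilon$ small for all non-best edges, so Slater's condition holds and KKT multipliers exist. Introduce multipliers
\begin{itemize}
  \item $\lambda_j$ (free) for $\sum_{i\in N_j} b_{ij}=q_j$,
  \item $\beta_i$ (free) for $\sum_{j\in M_i}b_{ij}=1$,
  \item $\mu_j\ge 0$ for $q_j\le 1$.
\end{itemize}
Drop the common factor $1/n$.

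Stationarity in $b_{ij}$ gives $\ln v_{ij} - \lambda_j - \beta_i \le 0$, with equality whenever $b_{ij}>0$. Stationarity in $q_j$ gives $-\ln q_j - 1 + \lambda_j - \mu_j = 0$, together with complementary slackness $\mu_j(1-q_j)=0$.

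For the boundary behaviour of $q_j$, note that every item $j$ has some edge and the derivative of $-q\ln q$ tends to $+\infty$ as $q\to 0$. Moving a little spending from any $b_{i'j'}>0$ onto an edge of $j$ therefore improves the objective when $q_j=0$. Hence $q_j>0$ at the optimum, so the $q_j$-stationarity above is an honest equation.

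Next I define prices through $p_j := e^{\lambda_j - 1}$ and rescale valuations by $v_{ij}\mapsto v_{ij}e^{-\beta_i-1}$. This rescaling changes the objective only by a constant per agent, because $\sum_j b_{ij}=1$. Hence the optimum $(b,q)$ is unchanged, and under it the $b$-conditions become:
\begin{itemize}
  \item $v_{ij}\le p_j$ for every $ij\in E$;
  \item $v_{ij}=p_j$ whenever $b_{ij}>0$.
\end{itemize}
These are the second and third bullets, the latter in contrapositive form.

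For the first bullet, the $q$-equation reads $\ln q_j = \ln p_j - \mu_j$. If $q_j<1$ then $\mu_j=0$, so $q_j=p_j$, and in particular $p_j<1$. If $q_j=1$ then $\ln p_j=\mu_j\ge0$, so $p_j\ge1=q_j$. In both cases $q_j=\min\{p_j,1\}$, and $p_j>0$ automatically.

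The main care point is justifying the KKT conditions, in particular excluding $q_j=0$ where the objective is not differentiable. The derivative-blow-up argument above handles this. Alternatively, one can note that for fixed $b$ the objective in $q$ is determined, and verify the conditions via a perturbation argument directly.
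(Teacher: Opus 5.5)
The paper gives no proof of this lemma; it imports it from Cole and Gkatzelis. Your KKT derivation is the standard way this fact is obtained for CP\eqref{CP:fSR}, and its core checks out. The stationarity conditions are correct. The rescaling $v_{ij}\mapsto v_{ij}e^{-\beta_i-1}$ shifts the objective by the constant $-\frac1n\sum_{i}(\beta_i+1)$ (because $\sum_{j}b_{ij}=1$), so it does not change the optimum. With $p_j=e^{\lambda_j-1}$, the three bullets then follow exactly as you describe.

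The one step that is wrong as written is the appeal to Slater's condition. A feasible point with all $q_j<1$ need not exist. Since $\sum_{j\in M}q_j=\sum_{i\in N}\sum_{j\in M_i}b_{ij}=n$, when $m=n$ every feasible point has $q_j=1$ for all $j$. Likewise, an agent with $M_i=\{j\}$ forces $q_j=1$. (The phrase ``non-best edges'' is also never defined.)

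You do not actually need Slater. All constraints are affine, so at any optimum where the objective is differentiable, KKT multipliers exist by the first-order optimality condition over a polyhedron together with Farkas' lemma. So reorder the argument: first show $q_j>0$ for all $j$ at the optimum, which makes the objective differentiable there, and only then invoke KKT.

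In the positivity step, make the perturbation precise. Choose $i$ with $ij\in E$. Since $b_{ij}\le q_j=0$ and $\sum_{j'\in M_i}b_{ij'}=1$, some $b_{ij'}>0$ with $j'\neq j$. Shifting $\delta$ from $b_{ij'}$ to $b_{ij}$ keeps the point feasible and changes the objective by $-\delta\ln\delta+O(\delta)>0$ for small $\delta$. The $O(\delta)$ term is finite because $q_{j'}\ge b_{ij'}>0$. The shift must stay within a single agent $i\in N_j$; moving spending from an arbitrary $b_{i'j'}$, as your wording allows, would violate the budget constraints. With these repairs the proof is complete.
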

    The lemma says that one can define prices $p_j$ for items so that the value of an item $j$ to any agent is at most its price. So the \emph{bang-per-buck (BaB)} for any $ij$ pair is at most $1$. Moreover, in the solution $(b, q)$, $i$ will only be allocated items with BaB exactly 1. When agent $i$ gets a value $b_{ij}$ from item $j$, he pays $\$b_{ij}$ to $j$. Every agent $i$ spends $\$1$, and the money spent on any item $j$ is $\$q_j := \min\{\$p_j, \$1\}$. This achieves a market equilibrium with ``restricted spending''.

\subsection{Rounding Algorithm of \cite{theoretics:14642}}
\label{sec:rounding-framework}
We revisit the rounding algorithm of \cite{theoretics:14642}, which is the same as that of \cite{shmoys1993approximation} for the unrelated machine makespan minimization problem. This is the same rounding algorithm used in both \cite{soda/ImLi23} and \cite{Ola2017unrelated}.

We shall describe the rounding algorithm in the context of weighted Nash Social Welfare. We are given a fractional allocation $x \in [0, 1]^E$ of items to agents: we have $\sum_{i \in N_j} x_{ij} = 1$ for every $j \in M$. We further have $\sum_{j \in M_i} x_{ij} \geq 1$ for every $i \in N$ since otherwise the value of the solution $x$ will be $0$. (This property does not hold for scheduling problems, and it is not required by the rounding algorithm.) Throughout, for the given $x$, we shall use $x_i := (x_{ij})_{j \in M_i}$ to denote the fractional allocation for agent $i$.

For each agent $i \in N$, we sort items in $M_i$ by non-increasing $v_{ij}$ and partition the fractional items $x_i$ into $q_i := \lceil \sum_{j\in M_i} x_{ij} \rceil$ groups $G_i = \{ z^{i,1}, z^{i,2}, \dots, z^{i, q_i} \in [0, 1]^{M_i} \}$ using this order, each containing $1$ fractional item except for the last one. Here each $z^{i,t}$ is a fractional bundle over $M_i$ with $|z^{i,t}|_1 \leq 1$, and for each $j\in M_i$, $z^{i,t}_j$ denotes the fraction of item $j$ contained in $z^{i,t}_j$. 
More formally, we define a total order $\succ_i$ over $M_i$ that sorts the items in non-increasing order of $v_{ij}$ with ties broken arbitrarily. Thus, $j \succ_i j'$ implies $v_{ij} \geq v_{ij'}$. Then $z^{i,1}, z^{i,2}, \cdots, z^{i, q_i}$ are vectors in $[0, 1]^{M_i}$ satisfying the following properties:
\begin{itemize}[topsep=5pt,leftmargin=*]
    \item $\sum_{t = 1}^{q_i}z^{i, t} = x_i$. 
    \item $|z^{i,t}|_1 = 1$ for every $t = 1, 2, \cdots, q_i - 1$. 
    \item For any two integers $t, t'$ with $1 \leq t < t' \leq q_i$ and two items $j, j' \in M_i$ with $j \succ_i j'$, it can not happen that $z^{i,t}_{j'} > 0$ and $z^{i, t'}_j > 0$.
\end{itemize}
Notice that under the above 3 conditions, the choice of $(z^{i, 1}, z^{i, 2}, \cdots, z^{i, q_i})$ is unique.

This yields a collection $G = \biguplus_i G_i$ of groups and a fractional matching between the groups $G$ and items $M$: every group $z \in G$ is matched to an item $j$ with fraction $z_j$ (or $0$ if $j$ is not in the domain of $z$). So each group $z$ is matched to an extent of $|z|_1$, and each item is matched to an extent of $1$. By the way we construct the groups, a group which is not the last group for an agent $i$ is matched to an extent of $1$. 

In the randomized rounding algorithm, we arbitrarily partition the fractional matching into a convex combination of (partial-)matchings between $G$ and $M$ and randomly pick a matching from the combination, according to their masses.  Each integral matching allocates each item to one group, hence induces an integral allocation of items to agents: an item $j \in M$ is allocated to the agent $i$ if it is matched to a group in $G_i$. Marginal probabilities are maintained by the rounding algorithm: the probability that an item $j$ is allocated to an agent $i$ is precisely $x_{ij}$ for any $ij \in E$. 

The algorithm can be derandomized by using a polynomial-sized decomposition and outputting the best matching from the decomposition. However, as is typical, it is more convenient to analyze the randomized version of the rounding algorithm. 



\subsection{Creating Unweighted NSW Instance $\calI$ with Identical Agents and EF1 Allocation $\sigma$ for a Fixed Agent $i$} 
\label{sec:creating-identical}
Let $\rho$ be the random allocation given by the algorithm.  In the analysis of the randomized rounding algorithm, we shall focus on a fixed agent $i$ most of the time. We create an unweighted NSW instance $\calI$ with identical agents, which are all copies of the agent $i$, and an EF1 allocation $\sigma$ for $\calI$.  

Let $\Delta$ be a sufficiently large integer such that $\Pr[\rho^{-1}(i) = S]$ for every $S$ is an integer multiple of $1/\Delta$, so is $x_{ij}$ for every $j \in M_i$. We create $\Delta$ copies of the agent $i$, and index them as $[\Delta]$. We create $\Delta x_{ij}$ copies of item $j$ for every $j \in M_i$. The allocation $\sigma$ is defined according to the output distribution of the rounding algorithm: for every $S \subseteq M_i$, there are precisely $\Delta \cdot \Pr[\rho^{-1}(i) = S]$ agents in $[\Delta]$ who gets a copy of $S$, in the allocation $\sigma$. It was shown in \cite{theoretics:14642} that $\sigma$ is an EF1-allocation. 

The procedure of creating $\calI$ and $\sigma$ also applies to the rounding algorithm for scheduling problems.

\section{The Convex Program}
\label{sec:CP}
    In this section, we describe our convex program which yields an $e^{1/e}$-approximation for weighted NSW. We prove that it is equivalent to CP\eqref{CP:fSR} in the unweighted case, showing that the integrality gap of CP\eqref{CP:fSR} is precisely $e^{1/e}$. 
    
    For every $i \in N$, and $x_i \in [0, 1]^{M_i}$ satisfying $|x_i|_1 \geq 1$, we define $h_i(x_i)$ to be any real $h > 0$ satisfying 
    \begin{align*}
        \sum_{j \in M_i} \min\{v_{ij}, h\} x_{ij} = h.    
    \end{align*}
    If $|x_i|_1 > 1$, the choice of $h_i(x_i)$ is unique, which can be determined by the following procedure. Let $x'_i \in [0, 1]^{M_i}$ be the vector that maximizes $\sum_{j \in M_i}v_{ij}x'_{ij}$ subject to $x'_i \leq x_i, |x'_i| = 1$. In other words, $x'_i$ consists of the largest one fractional item in $x_i$. We construct a histogram for $x'_i$, where every item $j \in M_i$ is represented by a rectangle of width $x'_{ij}$ and height $v_{ij}$. Let $V := \sum_{j \in M_i}(x_{ij} - x'_{ij})v_{ij}$ denote the total value of remaining fractional items. Treating this value as $V$ units of water, we pour the water into the histogram from bottom to top. Then $h_i(x_i)$ is the resulting water level. See Figure~\ref{fig:f} for illustration of the definition. When $|x_i|_1 = 1$, no water is poured, and $h_i(x_i)$ can be any real number in $\big(0, \min\{v_{ij}: j \in M_i, x_{ij}> 0\}\big]$. \medskip 

    \begin{figure}
        \centering
        \includegraphics[width=0.35\textwidth]{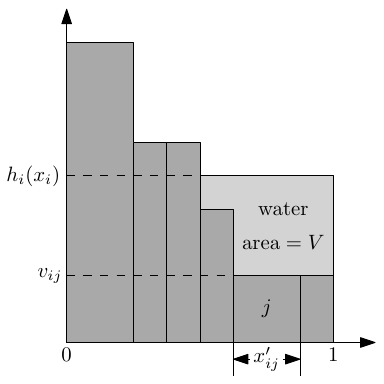}
        \caption{Definition of $h_i(x_i)$ and $f_i(x_i)$ when $|x_i|_1 > 1$. The dark gray part denotes the 1 fractional largest items in $x_i$. The light gray part denotes the $V$ units of water. $f_i(x_i)$ is the average height of the histogram on a logarithmic scale. }
        \label{fig:f}
    \end{figure}
    
    With $h_i(x_i)$ defined, we define $f_i(x_i)$ as follows:
    \begin{align}
        f_i(x_i):=\sum_{j\in M_i: v_{ij} > {h_i(x_i)}} x_{ij} \ln\frac{v_{ij}}{h_i(x_i)} + \ln h_i(x_i). \label{equ:f-NSW}
    \end{align}
    So, $f_i(x_i)$ is the average height of the histogram in Figure~\ref{fig:f} on a logarithmic scale. Notice that when $|x_i| = 1$, we have $f_i(x_i) = \sum_{j \in M_i}x_{ij} \ln v_{ij}$, which is independent of the choice of $h_i(x_i)$. \medskip

    We give an alternative description of $f_i(x_i)$ which will prove the concavity of the function.
    For every $i \in N, x_i \in [0, 1]^{M_i}$ with $|x_i| \geq 1$ and $h > 0$, we define
    \begin{align*}
        g_i(x_i, h):= \sum_{j \in M_i: v_{ij} > h} x_{ij} \ln \frac{v_{ij}}{h} + \ln h  + \frac 1h\sum_{j \in M_i} \min\{v_{ij}, h\}x_{ij} - 1. 
    \end{align*}
    Notice that for a fixed $h$, the function $g_i$ is linear in $x_i$.

    \begin{lemma}
        \label{lemma:fx}
        For any $i \in N$, $x_i \in [0, 1]^{M_i}$ satisfying $|x_i|_1 \geq 1$, we have
        \begin{align*}
            f_i(x_i)=\min_{h > 0} g_i(x_i, h) = g_i(x_i, h_i(x_i)).
        \end{align*}
    \end{lemma}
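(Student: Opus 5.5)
Fix $i$ and $x_i$ with $|x_i|_1\ge 1$, and view $g_i(x_i,h)$ as a one-variable function $\phi(h):=g_i(x_i,h)$ on $(0,\infty)$. The plan is to show that $\phi$ is continuous, compute its derivative piecewise, and check that $\phi$ is non-increasing up to the point where $\sum_j \min\{v_{ij},h\}x_{ij}=h$ and non-decreasing afterwards. Then any $h_i(x_i)$ attains the minimum. The identity $f_i(x_i)=g_i(x_i,h_i(x_i))$ is immediate from the definitions: at $h=h_i(x_i)$ the term $\frac1h\sum_j\min\{v_{ij},h\}x_{ij}-1$ vanishes by the defining equation of $h_i$, and what remains is exactly \eqref{equ:f-NSW}.

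\textbf{Continuity and derivative.} Continuity holds because each term $x_{ij}\ln(v_{ij}/h)\mathbb 1[v_{ij}>h]$ equals $x_{ij}\max\{\ln(v_{ij}/h),0\}$, and $\min\{v_{ij},h\}/h$ is continuous. On any open interval avoiding the breakpoints $\{v_{ij}\}$, write $A(h)=\sum_{j:v_{ij}>h}x_{ij}$ and $B(h)=\sum_{j:v_{ij}\le h}v_{ij}x_{ij}$. Then
\[
\phi(h)=\sum_{j:v_{ij}>h}x_{ij}\ln v_{ij}-A\ln h+\ln h+A+\frac{B}{h}-1,
\]
so
\[
\phi'(h)=\frac{1-A}{h}-\frac{B}{h^2}=\frac{h-Ah-B}{h^2}=\frac{h-\sum_j\min\{v_{ij},h\}x_{ij}}{h^2}.
\]

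\textbf{Sign analysis.} Define $\psi(h):=\sum_j\min\{v_{ij},h\}x_{ij}-h$, so that $\phi'=-\psi/h^2$ almost everywhere. The function $\psi$ is concave and piecewise linear with slope $A(h)-1$. Near $0$ its slope is $|x_i|_1-1\ge 0$ and $\psi(0)=0$; for large $h$ its slope is $-1$. A concave function with $\psi(0)=0$ and eventually negative slope is $\ge 0$ on $[0,h^\*]$ and $<0$ afterwards, where $h^\*$ is the largest root. Its zero set in $(0,\infty)$ is an interval: a single point when $|x_i|_1>1$, and $(0,\min\{v_{ij}:x_{ij}>0\}]$ when $|x_i|_1=1$. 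This matches the admissible values of $h_i(x_i)$ given in the definition.

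\textbf{Conclusion.} Since $\phi$ is continuous and piecewise $C^1$, it is non-increasing on $(0,h^\*]$ (where $\psi\ge0$) and increasing on $[h^\*,\infty)$. Hence $\min_{h>0}\phi=\phi(h)$ for every root $h$ of $\psi$, in particular for $h=h_i(x_i)$. When $|x_i|_1=1$, $\phi$ is constant on the zero interval, which is consistent.

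\textbf{Main obstacle.} The delicate part is the sign analysis of $\psi$, namely verifying that its zero set is exactly the set of admissible $h_i(x_i)$ and that $\psi\ge 0$ to its left. Concavity of $\psi$ together with $\psi(0)=0$ settles this cleanly. The remaining care is to handle the breakpoints, which is done by using continuity of $\phi$.
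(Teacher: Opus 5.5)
Your proof is correct and follows the same route as the paper. Both compute $\partial g_i/\partial h = \frac{1}{h^2}\bigl(h - \sum_{j}\min\{v_{ij},h\}x_{ij}\bigr)$ away from the breakpoints, use its sign on either side of $h_i(x_i)$ to locate the minimum, and observe that the extra term vanishes at $h = h_i(x_i)$. Your concavity argument for $\psi$ makes explicit the sign analysis, including the degenerate case $|x_i|_1 = 1$, which the paper asserts directly from the definition of $h_i(x_i)$.
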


    \begin{proof}
        We calculate $\frac{\partial g_i}{\partial h}$ for any $h \neq v_{ij}$ for any $j \in M_i$:
        \begin{align*}
            \frac{\partial g_i}{\partial h} &= \sum_{j \in M_i: v_{ij}>h} (-\frac{x_{ij}}{h}) + \frac1h - \sum_{j \in M_i: v_{ij} < h} \frac{v_{ij}x_{ij}}{h^2} = \frac{1}{h} - \frac1{h^2}\sum_{j \in M_i}x_{ij} \min\{v_{ij}, h\} \\
            &= \frac{1}{h^2}\left(h - \sum_{j \in M_i}x_{ij} \min\{v_{ij}, h\}\right).
        \end{align*}
        So, the derivative is continuous on $h > 0$.  Moreover, $\frac{\partial g_i}{\partial h} \leq 0$ when $h < h_i(x_i)$ and $\frac{\partial g_i}{\partial h} \geq 0$ when $h > h_i(x_i)$. So, the function is minimized when $h = h_i(x_i)$.  Moreover, when $h = h_i(x_i)$, we have $g_i(x_i, h) = f_i(x_i)$ as $\sum_{j \in M_i}\min \{v_{ij}, h_i(x_i)\}x_{ij} = h_i(x_i)$.
    \end{proof}

    As $g_i(x_i, h)$ is a linear function of $x_i$ for any fixed $h > 0$, we have
    \begin{coro}
        $f_i$ is concave over its domain. 
    \end{coro}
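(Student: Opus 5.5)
The statement follows from Lemma~\ref{lemma:fx} together with the standard fact that a pointwise infimum of concave (here, affine) functions is concave. I would carry out the argument in three steps.

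\textbf{Convexity of the domain.} The domain is $D_i := \{x_i \in [0,1]^{M_i} : |x_i|_1 \geq 1\}$. Since every coordinate is nonnegative, $|x_i|_1 = \sum_{j} x_{ij}$ is a linear function on $[0,1]^{M_i}$. Hence $D_i$ is the intersection of the cube with a half-space, and so it is convex.

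\textbf{Affinity of each $g_i(\cdot, h)$.} Fix $h > 0$. Then $g_i(x_i, h)$ equals
\[
\sum_{j \in M_i} c_j(h)\, x_{ij} + \ln h - 1,
\qquad\text{where } c_j(h) := \mathbb{1}[v_{ij} > h]\ln\frac{v_{ij}}{h} + \frac{\min\{v_{ij}, h\}}{h}.
\]
The coefficients $c_j(h)$ do not depend on $x_i$, so $g_i(\cdot, h)$ is affine on $D_i$.

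\textbf{Concavity of the minimum.} Take $x, y \in D_i$ and $\lambda \in [0,1]$, and set $z := \lambda x + (1-\lambda) y$, which lies in $D_i$. By Lemma~\ref{lemma:fx}, the minimum over $h$ defining $f_i(z)$ is attained at $h^* := h_i(z)$. Using affinity in the first step and the minimization characterization of Lemma~\ref{lemma:fx} for $x$ and for $y$ in the second, we get
\[
f_i(z) = g_i(z, h^*) = \lambda g_i(x, h^*) + (1-\lambda) g_i(y, h^*) \geq \lambda f_i(x) + (1-\lambda) f_i(y).
\]
This is exactly concavity of $f_i$.

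The only point needing care is the boundary case $|z|_1 = 1$, where $h_i(z)$ is not unique. This causes no difficulty, because Lemma~\ref{lemma:fx} already asserts $f_i(z) = \min_{h>0} g_i(z,h)$ for any valid choice of $h_i(z)$. Even if one did not want to rely on the minimum being attained, it suffices to use that $f_i(z)$ is the infimum: for every $h > 0$ we have $g_i(z,h) \geq \lambda f_i(x) + (1-\lambda) f_i(y)$, and taking the infimum over $h$ gives the same inequality.
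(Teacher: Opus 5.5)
Your proof is correct and matches the paper's argument: the paper obtains concavity directly from Lemma~\ref{lemma:fx}, since $f_i$ is the pointwise minimum over $h>0$ of the functions $g_i(\cdot,h)$, each of which is linear in $x_i$. Your added steps (convexity of the domain, the explicit affine form of $g_i(\cdot,h)$, and the boundary case $|z|_1=1$) only spell out details the paper leaves implicit.
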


    Our convex program can simply be written as follows:
    \begin{align}
        \max \qquad \sum_{i \in N} w_i f_i(x_i) \label{CP:NSW}
    \end{align} \vspace*{-5pt}
    \begin{align*}
        \sum_{j \in M_i} x_{ij} \geq 1, \forall i \in N; \qquad
        \sum_{i \in N_j} x_{ij} = 1,\forall j \in M; \qquad
        x_{ij} \geq 0, \forall ij \in E. 
    \end{align*}

    \subsection{Solving CP\eqref{CP:NSW} via Discretization}
    Unlike \cite{sigecom/0001DGJMVY17}, we could not show the rationality of the optimum solution of the convex program. So, it is not clear if the convex program can be solved in polynomial time, and more specifically, if the optimum solution can be represented using rational numbers. Nevertheless, the standard discretization technique allows us to solve the CP approximately within an additive error of $\ln(1+\epsilon)$ for any constant $\epsilon > 0$.  For every $i \in N$, we let $\ell_i := \min\{v_{ij}: j \in M_i\}$ and $r_i := \sum_{j \in M_i}v_{ij}$, so that we always have $h_i(x_i) \in [\ell_i, r_i]$. (In case $|x_i| = 1$, we can choose $h_i(x_i) = \min\{v_{ij}: j \in M_i, x_{ij} > 0\}$.) Then, let $H_i = \{r_i (1+\epsilon)^{-t} \geq \ell_i: t \in \Z_{\geq 0} \}$. We shall use 
    \begin{align*}
        \bar{f}_i(x_i):= \min_{h \in H_i} g_i(x_i, h)
    \end{align*}
     as an approximation of $f_i(x_i)$.  
    \begin{lemma}
        $f_i(x_i) \leq \bar{f}_i(x_i) < f_i(x_i) + \ln(1+\epsilon)$ for every $x_i \in [0, 1]^{M_i}$ with $|x_i|_1 \geq 1$. 
    \end{lemma}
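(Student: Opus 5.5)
The lower bound is immediate from Lemma~\ref{lemma:fx}. Since $f_i(x_i)=\min_{h>0} g_i(x_i,h)$ and $H_i\subseteq \R_{>0}$, we get $\bar f_i(x_i)=\min_{h\in H_i} g_i(x_i,h)\ge f_i(x_i)$. For the upper bound it suffices to exhibit a single $h\in H_i$ with $g_i(x_i,h)<f_i(x_i)+\ln(1+\epsilon)$.

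Let $h^*:=h_i(x_i)$, chosen in $[\ell_i,r_i]$ as in the discussion preceding the lemma. The set $H_i$ consists of $r_i, r_i/(1+\epsilon), \dots$ down to the last element that is still $\ge \ell_i$. Hence $H_i$ has an element $h$ with $h^*\le h<(1+\epsilon)h^*$: take the smallest element of $H_i$ that is at least $h^*$. This element exists because $r_i\ge h^*$. If $h\ge (1+\epsilon)h^*$, then $h/(1+\epsilon)\ge h^*\ge\ell_i$ would also lie in $H_i$, contradicting the minimality of $h$.

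Next I bound $g_i(x_i,h)-g_i(x_i,h^*)$ by integrating the derivative computed in the proof of Lemma~\ref{lemma:fx}:
\[
\frac{\partial g_i}{\partial h}=\frac{1}{h^2}\Big(h-\sum_{j\in M_i}x_{ij}\min\{v_{ij},h\}\Big)\le \frac1h,
\]
since the subtracted sum is nonnegative. The function $g_i(x_i,\cdot)$ is continuous and piecewise differentiable, so
\[
g_i(x_i,h)-g_i(x_i,h^*)\le \int_{h^*}^{h}\frac{dt}{t}=\ln\frac{h}{h^*}<\ln(1+\epsilon).
\]
Because $g_i(x_i,h^*)=f_i(x_i)$, this gives $\bar f_i(x_i)\le g_i(x_i,h)<f_i(x_i)+\ln(1+\epsilon)$.

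The main point needing care is that $h^*\in[\ell_i,r_i]$ in every case, since this is what guarantees a grid point within a factor $1+\epsilon$ above $h^*$.

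If $|x_i|_1>1$, the defining equation gives $h^*=\sum_j\min\{v_{ij},h^*\}x_{ij}$. The right-hand side is at most $\sum_j v_{ij}=r_i$. It is also at least $\min\{\ell_i,h^*\}\cdot|x_i|_1$. If $h^*<\ell_i$, this lower bound equals $h^*|x_i|_1>h^*$, a contradiction, so $h^*\ge\ell_i$.

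If $|x_i|_1=1$, $h^*$ is an arbitrary value in $\big(0,\min\{v_{ij}:x_{ij}>0\}\big]$, and every such choice gives $g_i(x_i,h^*)=f_i(x_i)$ by Lemma~\ref{lemma:fx}. We fix the choice $h^*=\min\{v_{ij}:x_{ij}>0\}$, which lies in $[\ell_i,r_i]$.

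The integration step itself is routine once the derivative bound is in hand.
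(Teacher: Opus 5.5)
Your proof is correct and follows the paper's argument. The lower bound comes from $f_i=\min_{h>0}g_i$; for the upper bound you take the smallest grid point $\bar h\in H_i$ with $\bar h\ge h_i(x_i)$, bound $\partial g_i/\partial h\le 1/h$, and integrate to get $\ln(\bar h/h_i(x_i))<\ln(1+\epsilon)$. You also check explicitly that $h_i(x_i)\in[\ell_i,r_i]$, fixing $h_i(x_i)=\min\{v_{ij}:x_{ij}>0\}$ when $|x_i|_1=1$; the paper only asserts this in the text before the lemma.
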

    \begin{proof}
        $f_i(x_i) \leq \bar{f}_i(x_i)$ holds trivially and so we only need to prove the second inequality. Let $\bar h$ be the smallest number in $H_i$ with $\bar h \geq h_i(x_i)$. Then, we have $h_i(x_i) \leq \bar h < (1+\epsilon)h_i(x_i)$. From the proof of Lemma~\ref{lemma:fx}, we know $\frac{\partial g_i}{\partial h} \leq \frac{1}{h}$.  So, we have $g_i(x_i, \bar h) - g_i(x_i, h_i(x_i)) \leq \int_{h_i(x_i)}^{\bar h} \frac{1}{h} \sfd h = \ln\frac{\bar h}{h_i(x_i)} < \ln(1+\epsilon)$. The lemma follows from that $\bar{f}_i(x_i) \leq g_i(x_i, \bar h)$ and $f_i(x_i) = g_i(x_i, h_i(x_i))$.
    \end{proof}

    Therefore, we can consider the new convex program where the objective is to maximize $\sum_{i \in N}w_i \bar{f}_i(x_i)$.  The new convex program can be reformulated explicitly as a polynomial-sized linear program: we change our objective function to $\sum_{i \in N} w_i \bar{f}_i$, and add linear constraints $\bar{f}_i \leq g_i(x_i, h)$ for every $i \in N$ and $h \in H_i$ to the program.

    \subsection{Equivalence of CP\eqref{CP:NSW} and CP\eqref{CP:fSR} in Unweighted Case}
     In this section, we prove Theorem~\ref{thm:equivalence} by showing the equivalence of CP\eqref{CP:NSW} and CP\eqref{CP:fSR} for unweighted NSW. 

    \begin{proof}[Proof of Theorem~\ref{thm:equivalence}]
        First, we show the value of CP\eqref{CP:NSW} is at most that of CP\eqref{CP:fSR}. Given a valid solution $x \in [0, 1]^E$ to CP\eqref{CP:NSW}, we convert it to a valid solution to CP\eqref{CP:fSR}, with value at least that of $x$ to CP\eqref{CP:NSW}.

        

        By scaling the valuation functions, we assume $h_i(x_i) = 1$ for every $i \in N$. We construct a solution to CP\eqref{CP:fSR} as follows: 
        \begin{align*}
            b_{ij} := x_{ij} \min\{v_{ij}, 1\}, \forall ij \in E, \quad \text{and} \quad q_j := \sum_{i \in N_j} b_{ij}, \forall j \in M.
        \end{align*}
        For every $j \in M$, we have $q_j \leq \sum_{i \in N_j} x_{ij} = 1$.  For every $i \in N$,  we have $\sum_{j \in M_i} b_{ij} = \sum_{j \in M_i} x_{ij} \min\{v_{ij}, 1\} = 1$,  by the definition of $h_i(x_i)$ and that $h_i(x_i) = 1$. Therefore, the solution $(b, q)$ to CP\eqref{CP:fSR} is valid.
        
        We now compute the value of $(b, q)$ to CP\eqref{CP:fSR}. 
        Observe that for every $j \in M$, $\sum_{i \in N_j} \frac{b_{ij}}{q_j} = 1$ and $\sum_{i \in N_j}\frac{b_{ij}}{q_j} \cdot \frac{1}{\min\{v_{ij}, 1\}} = \sum_{i \in N_j}\frac{x_{ij}}{q_j} = \frac{1}{q_j}$. By the concavity of the logarithm function $\ln(\cdot)$, we obtain $\sum_{i \in N_j} \frac{b_{ij}}{q_j} \cdot \ln \frac{1}{\min\{v_{ij}, 1\}} \leq \ln \frac{1}{q_j}$,
        which is equivalent to
        \begin{align*}
            \sum_{i \in N_j} b_{ij} \ln \min\{v_{ij},1\} \geq q_j \ln q_j.
        \end{align*}

        So, we have
        \begin{align*}
            \sum_{ij \in E} b_{ij} \ln v_{ij} &=
            \sum_{ij \in E} b_{ij} \ln \min\{v_{ij}, 1\} +
            \sum_{ij \in E : v_{ij} > 1} b_{ij} \ln v_{ij}\\
            &\geq \sum_{j \in M} q_j\ln q_j + \sum_{ij \in E:v_{ij}>1} x_{ij}\ln v_{ij} = \sum_{j \in M} q_j\ln q_j + \sum_{i \in N} f_i(x_i),
        \end{align*}
        by the definition of $f_i(x_i)$ and that $h_i(x_i) = 1$. Therefore,
        \begin{align*}
            \frac1n\left(\sum_{ij \in E} b_{ij} \ln v_{ij} - \sum_{j \in M} q_j \ln q_j\right)
            \geq \frac1n\sum_{i \in N} f_i(x_i). 
        \end{align*}\medskip

        Now we prove that the value of CP\eqref{CP:fSR} is at most that of CP\eqref{CP:NSW}. Let $(q, b)$ be the optimum solution to CP\eqref{CP:fSR}. We scale the valuation functions as stated in Lemma~\ref{lemma:pricing}, and let $p \in \R_{>0}^M$ be the price vector from the lemma. 
        We construct a solution $x \in [0, 1]^E$ to CP\eqref{CP:NSW} whose value is at least that of $(q, b)$. The definition of $x$ is simple:
        \begin{align*}
            x_{ij} := \frac{b_{ij}}{q_j}, \forall ij \in E.
        \end{align*}
        For every $j \in M$, we have $\sum_{i \in N_j} x_{ij} = \sum_{i \in N_j} \frac{b_{ij}}{q_j} = 1$. For every $i \in N$, we have $\sum_{j \in M_i}x_{ij} = \sum_{j \in M_i} \frac{b_{ij}}{q_j} \geq \sum_{j \in M_i} b_{ij} = 1$. So, $x$ is a valid solution to CP\eqref{CP:NSW}. 

        Focus on any agent $i \in N$. We have
        \begin{align*}
            \sum_{j \in M_i} \min\{v_{ij}, 1\} x_{ij} = \sum_{j \in M_i} \min\{p_j, 1\} x_{ij} = \sum_{j \in M_i} q_j x_{ij} = \sum_{j \in M_i} b_{ij} = 1.
        \end{align*}
        The first equality follows from the third property in Lemma~\ref{lemma:pricing}, and the second equality follows from the first property. Hence, we have $h_i(x_i) = 1$. (The choice is unique for if $|x_i|_1 > 1$, and it is valid if $|x_i|_1 = 1$.) So, 
        \begin{align*}
            f_i(x_i) = \sum_{j \in M_i:v_{ij} > 1} x_{ij} \ln v_{ij} = \sum_{j \in M_i: p_j > 1} b_{ij} \ln v_{ij} = \sum_{j \in M_i} b_{ij} \ln \frac{v_{ij}}{q_j}.
        \end{align*}
        The first equality is by the definition of $f_i(x_i)$ and that $h_i(x_i) = 1$, the second one is by that if $x_{ij} > 0$ and $v_{ij} > 1$, then $v_{ij} = p_j > 1$ and $q_j = 1$, and the third one is by that $q_j = \min\{p_j, 1\}$ and that $b_{ij} > 0$ implies $v_{ij} = p_j$. 

        So, 
        \begin{flalign*}
            &&\frac1n\sum_{i \in N}f_i(x_i) = \frac1n\sum_{ij \in E}b_{ij}\ln\frac{v_{ij}}{q_j} = \frac1n \left(\sum_{ij\in E} b_{ij}\ln v_{ij} - \sum_{j \in M}q_j \ln q_j\right). &&\qedhere
        \end{flalign*}
    \end{proof}

\section{Analysis of Rounding Algorithm for CP\eqref{CP:NSW}: Proof of Theorem~\ref{thm:rounding}}
\label{sec:rounding}
After solving CP\eqref{CP:NSW} up to an additive error of $\ln(1+\epsilon)$, we obtain the fractional solution $x \in [0, 1]^E$. We then use the rounding algorithm described in Section~\ref{sec:rounding-framework} to output an integral allocation $\rho$ of items to agents. In this section, we show that the approximation ratio of the algorithm is $e^{1/e}$, matching the ratio given by \cite{theoretics:14642}. This proves Theorem~\ref{thm:rounding}.  As in \cite{theoretics:14642}, it is more convenient to consider the randomized version of the rounding algorithm. 

Till the end of this section, we focus on a fixed agent $i \in N$. Then, we create an unweighted NSW instance $\calI$ with $\Delta$ copies of agents $i$ (indexed by $[\Delta]$), and an EF1 allocation $\sigma$ for $\calI$, as described in Section~\ref{sec:creating-identical}. 

We shall use $M'$ to denote the set of items in $\calI$, and $M'' \subseteq M'$ to denote the $\Delta$ largest items in $M'$, according to the valuation function $v_i$ and breaking ties using the total order $\succ_i$ specified in Section~\ref{sec:creating-identical}. So, $\sigma$ allocates precisely one item in $M''$ to any agent in $[\Delta]$.

\begin{figure}
    \centering
    \includegraphics[width=\textwidth]{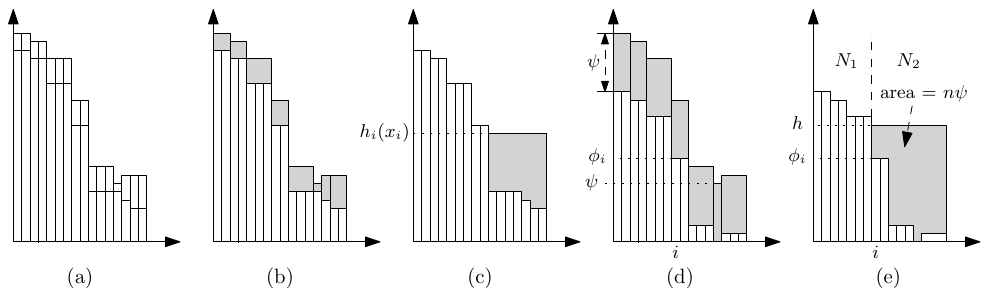}
    \caption{Liquidization Operations in Analysis of Rounding Algorithm in Section~\ref{sec:rounding} and Proof of the EF1-Gap (Theorem~\ref{thm:EF1-gap}) in Section~\ref{sec:EF1-gap}. Each rectangle represents a solid item, with height indicating its value. Gray solid polygons represent liquid items. Each column represents an agent, and the rectangles and the gray area in the column correspond the solid and liquid items allocated to the agent. All agents are identical. Figure (a) denotes the instance $\calI$ and the EF1-allocation $\sigma$ created in Section~\ref{sec:rounding}. Figure~(b) denotes the instance $\calI'$ obtained from $\calI$ after liquidization operations and the allocation $\sigma'$. Figure~(c) shows the optimum solution for $\calI'$. Figure~(d) denotes the liquidized instance for $\calI$ when we try to prove $\sigma$ is $e^{1/e}$-approximate in Section~\ref{sec:EF1-gap}. Figure~(e) denotes the optimum solution for the instance. $\phi_i$'s, $\psi, h, N_1$ and $N_2$ are depicted in Figures~(d) and (e).}
    \label{fig:liqudization}
\end{figure}

\paragraph{Creating $\calI'$ and $\sigma'$ via Liquidization of Items}
We define a \emph{liquidization} operation over $\calI$ and $\sigma$ as follows. Given an item $j$, \emph{liquidizing} item $j$ means splitting $j$ into many sufficiently-small items whose total value is $v_j$, and we call them \emph{liquid} items. Once an item is liquidized, we update the allocation $\sigma$ accordingly. The original items in $\calI$ that were not liquidized are called \emph{solid} items.

We liquidize all items in $M' \setminus M''$ in the instance $\calI$, and let $\calI'$ be the new instance. Let $\sigma'$ be the allocation $\sigma$ after liquidization. The set of solid items becomes $M''$; and $\sigma$ allocates precisely one solid item to an agent in $[\Delta]$.  Every agent $i' \in [\Delta]$ gets the same value in $\sigma'$ as he gets in $\sigma$; so the NSW value of $\sigma'$ is the same as that of $\sigma$.  Moreover, $\sigma'$ is an EF1 allocation for $\calI'$. The liquidization operation can only make the instance easier, and so the optimum NSW value of $\calI'$ is at least that of $\calI$. See Figure~\ref{fig:liqudization}(a) and \ref{fig:liqudization}(b) for an illustration of the liquidization process.

\paragraph{Comparing $\sigma'$ and Optimum Value of $\calI'$} We now analyze the optimum NSW value of $\calI'$. One can prove that in the optimum solution, every agent gets precisely one solid item: if some agent $i'$ gets at least $2$ solid items, then some agent $i''$ gets none. We can move one solid item from $i'$ to $i''$ and some liquid items from $i''$ to $i'$ without decreasing the NSW value.

If $M' = M''$, then there are no liquid items; clearly $\sigma'$ is the optimum allocation. We assume $M'' \subsetneq M'$ and thus $v_i(M' \setminus M'') > 0$. The optimum solution to the new instance can be obtained by allocating the $v_i(M' \setminus M'')$ value of liquid items using the water-filling procedure. Recall that $h_i(x_i)$ is the unique real $h$ such that $\sum_{j \in M_i}z^{i, 1}_j \min \{v_{ij}, h\} + \sum_{j \in M_i}(x_{ij} - z^{i, 1}_j) v_{ij} = h$, which is equivalent to $\frac{1}{\Delta}\sum_{j \in M''} \min\{v_{ij}, h\} + \frac{1}{\Delta}\cdot v_i(M' \setminus M'') = h$.   See Figure~\ref{fig:liqudization}(c) for an illustration of the optimum solution of $\calI'$ and $h_i(x_i)$. 


So, the logarithm of the optimum NSW value of $\calI'$ is precisely
 \begin{align*}
     &\quad \frac1\Delta \sum_{j \in M''} \ln \max\{v_{ij}, h_i(x_i)\} = \sum_{j \in M'':v_{ij}>h_i(x_i)}\frac1\Delta \ln \frac{v_{ij}}{h_i(x_i)} + \ln h_i(x_i) \\
     &= \sum_{j \in M':v_{ij}>h_i(x_i)}\frac1\Delta \ln \frac{v_{ij}}{h_i(x_i)} + \ln h_i(x_i) = \sum_{j \in M_i:v_{ij} > h_i(x_i)} x_{ij} \ln \frac{v_{ij}}{h_i(x_i)} + \ln h_i(x_i) = f_i(x_i).
 \end{align*}
 The second equality used that every $j \in M' \setminus M''$ has $v_{ij} \leq h_i(x_i)$.  By Theorem~\ref{thm:EF1-gap}, the logarithm of the NSW of $\sigma$ is at least $f_i(x_i) - \frac1e$. Therefore, $\E[\ln v_i(\rho^{-1}(i))] \geq f_i(x_i) - \frac1e$. 

\paragraph{Wrapping up the Analysis} We now consider all agents $i \in N$. By linearity of expectation, we have 
\begin{align*}
    \E\left[\sum_{i \in N}w_i \ln v_i(\rho^{-1}(i))\right] \geq \sum_{i \in N}w_i f_i(x_i) - \frac1e.    
\end{align*}
 By the convexity of exponential function, this implies 
 \begin{align*}
   \E\left[\prod_{i \in N}v_i(\rho^{-1}(i))^{w_i}\right] \geq e^{-1/e}\cdot \exp\left(\sum_{i \in N}w_i f_i(x_i)\right).
 \end{align*}

 Recall that $\sum_{i \in N}w_i f_i(x_i)$ is the value of $x$ to \eqref{CP:NSW}. This leads to a $e^{-1/e}(1+\epsilon)$-approximation for the weighted NSW problem.

\section{A Simple Proof of the $e^{1/e}$ EF1-Gap (Theorem~\ref{thm:EF1-gap})}

\label{sec:EF1-gap}

In this section, we give a simple proof of Theorem~\ref{thm:EF1-gap}. Recall that we are given an unweighted additive NSW instance with identical agents, defined by $N, n, M, m$ and $v$, and an EF1 allocation $\sigma$. Let $V_i:=v(\sigma^{-1}(i))$ for every $i \in N$ be the value allocated to $i$. Let $\psi := \min_{i \in N}V_i$ and assume $\psi > 0$ (otherwise $\opt = 0$). 
Let $\phi_i = V_i - \psi \geq 0$ for every $i \in N$.

As in Section~\ref{sec:rounding}, we shall liquidize some of the items in the NSW instance. The operations we apply here are slightly different. See Figure~\ref{fig:liqudization}(d) for the procedure.  Focus on each agent $i \in N$. Let $j$ be the largest item allocated to $i$ in the EF1 allocation. We liquidize all items in $\sigma^{-1}(i) \setminus j$. For the item $j$, we first break $j$ into two items: one with value $\phi_i$ and the other with value $v_j - \phi_i$, and then we liquidize the item of value $v_j - \phi_i$. So the total value of liquid items allocated to $i$ is precisely $v_j - \phi_i + (V_i - v_j) = \psi$.  This operation is valid as $V_i - v_j \leq \psi$, implied by that $\sigma$ is EF1. We remark that liquidizing a portion of the item $j$ is crucial for our simplified proof.  \medskip

After the liquidization operations, $\sigma$ remains EF1, and the value allocated to any agent does not change.  Moreover, the instance becomes easier, and so $\opt$ can only increase. Therefore, it is sufficient to focus on the current instance. As before, the non-liquid items are called \emph{solid} items. In the allocation $\sigma$, every agent $i$ gets a solid item of value $\phi_i$ (unless $\phi_i = 0$), and liquid items with total value $\psi$. So the NSW value of $\sigma$ is 
\begin{align*}
    \left(\prod_{i \in N}(\phi_i + \psi)\right)^{1/n}.
\end{align*}

As before, we can prove that every agent $i$ gets at most one solid item in the optimum solution. Therefore, the optimum allocation is obtained using the water-filling method as before. Let $h > 0$ be the unique real such that $\frac1{n}\sum_{i \in N} \min\{\phi_i, h\} + \psi = h$. We have 
\begin{align*}
    \opt = \left(\prod_{i \in N}\max\{\phi_i, h\}\right)^{1/n}.
\end{align*}
See Figure~\ref{fig:liqudization}(e) for an illustration of the optimum solution. 

\paragraph{Comparing $\opt$ and the Value of $\sigma$} Let $N_1 = \{i \in N: \phi_i > h\}$ and $N_2 := N \setminus N_1 = \{i \in N: \phi_i \leq h\}$.  Let $n_1 = |N_1|$ and $n_2 = |N_2| = n - n_1$. So $h = \frac1{n}\sum_{i \in N} \min\{\phi_i, h\} + \psi  = \frac1{n}  \left(\sum_{i \in N_1} h + \sum_{i \in N_2}\phi_i\right) + \psi $, which is equivalent to $n_2 h = n\psi  + \sum_{i \in N_2} \phi_i$. 
    
    We take logarithm of $\opt$ and the value of $\sigma$, and consider the difference scaled by $n$:
    \begin{align*}
        \sum_{i \in N} \ln \max\{\phi_i, h\} - \sum_{i \in N} \ln(\phi_i + \psi) = 
        \left(\sum_{i \in N_1} \ln \frac{\phi_i}{\phi_i + \psi} + \sum_{i \in N_2}\ln\frac{h}{\phi_i + \psi}\right) \leq \sum_{i \in N_2}\ln \frac{h}{\phi_i + \psi}.
    \end{align*}
    
    As $\sum_{i \in N_2} \phi_i = n_2 h -n\psi $, and $\phi_i \in [0, h]$ for every $i \in N_2$, by concavity of logarithm, we have
    \begin{align*}
        \sum_{i \in N_2}\ln (\phi_i + \psi) &\geq \frac{n_2 h - n\psi }{h} \cdot \ln (h + \psi ) + \frac{n\psi }{h}\ln \psi  \\
        &\geq \Big(n_2 - \frac{n\psi }{h}\Big) \ln h + \frac{n\psi }{h}\ln \psi  = n_2\ln h - \frac{n \psi }{h} \ln \frac{h}{\psi }.
    \end{align*}
    Therefore, 
    \begin{align*}
         \sum_{i \in N_2} \ln \frac{h}{\phi_i + \psi}\leq \frac{n\psi }{h} \ln \frac{ h}{\psi } \leq \frac{n}e.
    \end{align*}
    The second inequality used that $\frac1x \ln x$ for $x \geq 1$ obtains its maximum value $\frac1e$ at $x = e$.  This proves that 
    \begin{align*}
        \sum_{i \in N} \ln \max\{\phi_i, h\} - \sum_{i \in N} \ln(\phi_i + \psi) \leq \frac{n}e.
    \end{align*}
    So, $\left(\prod_{i \in N} \max \{\phi_i, h\}\right)^{1/n} \leq e^{1/e} \cdot \left(\prod_{i \in N} (\phi_i + \psi)\right)^{1/n}$.  This finishes the proof of Theorem~\ref{thm:EF1-gap}.

\section{Application to Scheduling Problems}
\label{sec:scheduling}
In this section, we show that the convex program techniques can be applied to two unrelated machine scheduling problems. 

\subsection{Scheduling to Minimize $\sum_{i \in M}\theta(\text{load}_i)$}
\label{sec:Lk-norm}

In this section, we consider the unrelated machine scheduling problem to minimize $\sum_{i \in M}\theta(\text{load}_i)$ as defined in Section~\ref{sec:problem-definitions}, where $\theta:\R_{\geq 0} \to \R_{\geq 0}$ is an increasing convex function with $\theta(0) = 0$. It contains the problems of minimizing $L_q$ norm as special cases.  

We need to assume the first and second-order derivatives $\theta', \theta'': \R_{\geq 0} \to \R_{\geq 0}$ exist and are continuous. By the monotonicity and convexity of $\theta$, we have $\theta'(t), \theta''(t) \geq 0$ for every $t > 0$.

As before, we fix a machine $i \in M$, and an allocation $x_i \in [0, 1]^J$.  If $|x_i|_1>1$, we define $h_i(x_i)$ to be the unique real $h > 0$ such that 
\begin{align*}
    \sum_{j \in J} \min\{p_{ij}, h\} x_{ij} = h.
\end{align*}
If $|x_i|_1\leq1$, we define $h_i(x_i) = 0$, which also satisfies the above equality. 

Then, we define 
\begin{align*}
    f_i(x_i):=\sum_{j\in J: p_{ij} > {h_i(x_i)}} x_{ij} \big(\theta(p_{ij})-\theta(h_i(x_i))\big) + \theta(h_i(x_i)).
\end{align*}

For every $x_i\in[0,1]^{J}$ and $h\ge 0$ we define $$g_i(x_i, h) := \sum_{j\in J:p_{ij}>h}x_{ij}(\theta(p_{ij}) - \theta(h)) + \theta(h) + \theta'(h)\left(\sum_{j\in J}\min\{p_{ij}, h\}x_{ij}-h \right).$$

\begin{lemma}
    For any $i \in N$, $x_i \in [0, 1]^{J}$, we have
    \begin{align*}
        f_i(x_i)=\max_{h > 0}g_i(x_i, h) = g_i(x_i, h_i(x_i)).
    \end{align*}
\end{lemma}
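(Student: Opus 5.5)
We will mirror the proof of Lemma~\ref{lemma:fx}, now showing that $h_i(x_i)$ maximizes rather than minimizes $g_i(x_i,\cdot)$. The first step is the easy identity $g_i(x_i, h_i(x_i)) = f_i(x_i)$. By the defining equation of $h_i(x_i)$, the correction term $\theta'(h)\big(\sum_{j}\min\{p_{ij},h\}x_{ij}-h\big)$ vanishes at $h = h_i(x_i)$, and the remaining terms of $g_i$ coincide with $f_i$. This holds both when $|x_i|_1 > 1$ and when $|x_i|_1\le 1$; in the latter case $h_i(x_i)=0$ and the bracket is $0$. If $h_i(x_i)=0$, we interpret the maximum over $h>0$ as a supremum approached as $h\to 0^+$, using continuity of $g_i$ in $h$ at $0$, which follows from continuity of $\theta,\theta'$ and $\theta(0)=0$.

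The second step is to differentiate $g_i(x_i,h)$ in $h$ at any $h$ not equal to some $p_{ij}$. Write $S(h) := \sum_j \min\{p_{ij},h\}x_{ij}$, so that $S'(h) = \sum_{j: p_{ij}>h} x_{ij}$. We get
\begin{align*}
\frac{\partial g_i}{\partial h} = -\theta'(h)\sum_{j:p_{ij}>h}x_{ij} + \theta'(h) + \theta''(h)\big(S(h)-h\big) + \theta'(h)\big(S'(h) - 1\big) = \theta''(h)\big(S(h)-h\big).
\end{align*}
The $\theta'$ terms cancel exactly. Across the breakpoints $h=p_{ij}$, the terms $x_{ij}(\theta(p_{ij})-\theta(h))$ and $\min\{p_{ij},h\}x_{ij}$ are continuous in $h$. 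Hence $g_i(x_i,\cdot)$ is continuous and piecewise $C^1$, and we may integrate the derivative.

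The third step is a sign analysis. The function $S(h)-h$ has slope $\sum_{j:p_{ij}>h}x_{ij}-1$, which is nonincreasing in $h$, so $S(h)-h$ is concave. It equals $0$ at $h=0$, and it tends to $\sum_j p_{ij}x_{ij}-h\to-\infty$. When $|x_i|_1>1$, it is positive on $(0,h_i(x_i))$ and negative beyond, since a concave function vanishing at $0$ and at $h_i(x_i)>0$ has exactly this sign pattern. When $|x_i|_1\le 1$, the slope is at most $0$ everywhere, so $S(h)-h\le 0$ for all $h>0$. Since $\theta''\ge 0$, the derivative $\partial g_i/\partial h$ is therefore $\ge 0$ for $h<h_i(x_i)$ and $\le 0$ for $h>h_i(x_i)$. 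So $g_i(x_i,\cdot)$ is maximized at $h_i(x_i)$, which gives $f_i(x_i)=\max_h g_i(x_i,h)$.

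The main obstacle is this sign argument, specifically establishing uniqueness of $h_i(x_i)$ and the sign pattern of $S(h)-h$ via concavity, together with the boundary case $h_i=0$ where the max is really a supremum at $h\to0^+$. Note that $\theta''$ may vanish, but this only weakens the inequalities to non-strict ones, which suffices. As in the NSW case, since $g_i$ is linear in $x_i$ for fixed $h$, this yields convexity of $f_i$ as a consequence.
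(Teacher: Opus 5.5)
Your proof is correct and follows the paper's argument: compute $\partial g_i/\partial h = \theta''(h)\bigl(\sum_j \min\{p_{ij},h\}x_{ij} - h\bigr)$, read off its sign on either side of $h_i(x_i)$, and observe that the $\theta'$ correction term vanishes at $h_i(x_i)$, so that $g_i(x_i,h_i(x_i))=f_i(x_i)$. Two of your additions fill in points the paper leaves implicit. The concavity argument justifies the sign pattern of $S(h)-h$. The case $|x_i|_1\le 1$ needs separate care: there $h_i(x_i)=0$, and the ``max'' over $h>0$ is really a supremum reached in the limit $h\to 0^+$, equivalently a max over $h\ge 0$.
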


\begin{proof}
    Throughout the proof, we fix $x_i$. We calculate $\frac{\partial g_i}{\partial h}$. We assume $h > 0$ and $h \neq p_{ij}$ for any $j \in J$:
    \begin{align*}
        \frac{\partial g_i}{\partial h} &= -\sum_{j \in J:p_{ij}>h}x_{ij}\theta'(h) + \theta'(h) + \theta''(h) \left(\sum_{j \in J}\min\{p_{ij}, h\}x_{ij}-h\right) + \theta'(h)\left(\sum_{j \in J:p_{ij}>h}x_{ij} - 1\right)\\
        &= \theta''(h)\left(\sum_{j \in J}\min\{p_{ij}, h\}x_{ij}-h\right).
    \end{align*}
    So $\frac{\partial g_i}{\partial h}$ is continuous over $(0, \infty)$. 
    As $\theta''(h) \geq 0$ for every $h > 0$, by our definition of $h_i(x_i)$, we have $\frac{\partial g_i}{\partial h} \geq 0$ when $h < h_i(x_i)$ and $\frac{\partial g_i}{\partial h}|_h \leq 0$ when $h > h_i(x_i)$. So, the function is maximized when $h = h_i(x_i)$. Also, when $h = h_i(x_i)$, we have $g_i(x_i, h) = f_i(x_i)$    as $\sum_{j\in J}\min\{p_{ij}, h_i(x_i)\}x_{ij} = h_i(x_i)$.
\end{proof}

\begin{coro}
    $f_i$ is a convex function over its domain.
\end{coro}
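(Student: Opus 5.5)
The plan is to deduce the corollary directly from the preceding lemma, which expresses $f_i$ as a pointwise maximum of functions that are linear in $x_i$. The key observation is that, for any fixed $h > 0$, the map $x_i \mapsto g_i(x_i, h)$ is affine in $x_i$. Indeed, the set $\{j \in J : p_{ij} > h\}$ depends only on $h$. The coefficients $\theta(p_{ij}) - \theta(h)$, $\theta(h)$, $\theta'(h)$ and $\min\{p_{ij}, h\}$ are all constants once $h$ is fixed. So
\begin{align*}
g_i(x_i, h) = \sum_{j \in J} c_j(h)\, x_{ij} + d(h),
\end{align*}
with $c_j(h) = \mathbb{1}[p_{ij} > h]\bigl(\theta(p_{ij}) - \theta(h)\bigr) + \theta'(h)\min\{p_{ij}, h\}$ and $d(h) = \theta(h) - h\theta'(h)$.

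By the lemma, $f_i(x_i) = \max_{h > 0} g_i(x_i, h)$ for every $x_i$ in the domain $[0,1]^J$. A pointwise supremum of a family of affine functions is convex on any convex domain: for $x, y$ and $\lambda \in [0,1]$, each $g_i(\cdot, h)$ satisfies $g_i(\lambda x + (1-\lambda) y, h) = \lambda g_i(x, h) + (1-\lambda) g_i(y, h) \le \lambda f_i(x) + (1-\lambda) f_i(y)$. Taking the maximum over $h$ then gives convexity. The domain $[0,1]^J$ is itself convex, so this finishes the argument.

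The only point needing care is the case $|x_i|_1 \le 1$, where $h_i(x_i) = 0$ lies outside the range $h > 0$ over which the maximum is taken. Here the lemma should be read as a supremum: $g_i(x_i, h)$ is continuous in $h$ and tends to $g_i(x_i, 0) = \sum_j x_{ij}\theta(p_{ij}) = f_i(x_i)$ as $h \to 0^+$, using $\theta(0) = 0$ and the continuity of $\theta'$. Alternatively, one can include $h = 0$ in the family. Either way, $f_i$ is a supremum of affine functions, and the convexity argument is unaffected. I expect no real obstacle beyond this boundary remark.
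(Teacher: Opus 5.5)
Your proposal is correct and follows the paper's route: the paper also reads the corollary off the preceding lemma, writing $f_i$ as a pointwise maximum over $h$ of the functions $g_i(\cdot,h)$, each of which is affine in $x_i$. Your boundary remark is a valid small tightening the paper glosses over. When $|x_i|_1<1$, $g_i(x_i,\cdot)$ is non-increasing on $(0,\infty)$ and its maximum need not be attained, so the lemma should be read as a supremum, or with $h=0$ included in the family.
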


A similar compact convex program as follows:
\begin{align}
    \min \qquad \sum_{i \in M} f_i(x_i) \label{CP:theta}
\end{align}\vspace*{-15pt}
\begin{align*}
    \sum_{i \in M} x_{ij} = 1, \forall j \in J; \qquad 
    x_{ij} \geq 0, \forall i\in M, j \in J.
\end{align*}


We briefly talk about how to approximately solve the convex program in Appendix~\ref{appendix:lp-solver}. 
Once we solved the problem, we use the rounding algorithm as described in Section~\ref{sec:rounding-framework}. To analyze the approximation ratio, we focus on a fixed machine $i \in M$.  We construct a scheduling instance $\calI$ with $\Delta$ copies of machine $i$, which are indexed by $[\Delta]$, as described in Section~\ref{sec:creating-identical}. Let $J'$ be the set of $n':=|J'|$ jobs in $\calI$, where each job $j \in J'$ has processing time $p_j := p_{ij}$. Also created is a EF1 allocation $\sigma: J' \to [\Delta]$ of jobs to machines.\footnote{Although in the scheduling problem, we need to minimize loads allocated to machines, envy-freeness is still defined by interpreting jobs as goods.}

As in Section~\ref{sec:rounding}, we liquidize all jobs in $\calI$ except for the largest job allocated to each $i' \in [\Delta]$. Let $\calI'$ be the resulting scheduling instance with identical machines, and $\sigma'$ be new allocation. $\sigma'$ remains EF1, and its cost is the same as that of $\sigma$. The optimum cost of $\calI$ is at least that of $\calI'$, which is precisely $f_i(x_i)$. We prove the following theorem:


\begin{theorem}
    \label{thm:EF1-gap-Lq}
    Consider the identical machine scheduling problem to minimize $\sum_{i \in M}\theta(\text{load}_i)$. 
    Assume 
    \begin{align*}
        \alpha := \sup_{t\in(0,1); d\geq0; r\geq \frac{d}{1-t}}\frac{t\theta(r+d)+(1-t)\theta(d)}{t\theta(r)+(1-t)\theta(\frac{d}{1-t})} < \infty.
    \end{align*}
    If $\sigma$ is an EF1 allocation for the scheduling instance with identical machines, then it is $\alpha$-approximate.
\end{theorem}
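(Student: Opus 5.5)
Mirror the simplified EF1-gap proof of Section~\ref{sec:EF1-gap}, with costs in place of logarithms and the inequality reversed. Let $\sigma$ be EF1 on $n$ identical machines $N$ with loads $V_i$, and let $\psi := \min_i V_i$ and $\phi_i := V_i - \psi$. On each machine $i$, with $j$ its largest job, split $j$ into a solid part of size $\phi_i$ and a remainder, and liquidize the remainder together with all other jobs on $i$. EF1 gives $V_i - p_j \le \psi$, so this is valid and every machine carries exactly $\psi$ of liquid load. Liquidization leaves the cost of $\sigma$ unchanged and can only decrease the optimum, so it suffices to compare $\sum_i \theta(\phi_i+\psi)$ with the optimum of the liquidized instance. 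As before, an exchange argument using convexity of $\theta$ shows the optimum puts at most one solid job per machine, and the liquid load is then water-filled. Writing $h$ for the level with $\frac1n\sum_i\min\{\phi_i,h\}+\psi = h$, the optimum is $\sum_i \theta(\max\{\phi_i,h\})$. With $N_1=\{i:\phi_i>h\}$ and $N_2$ the rest, we get $n_2 h = n\psi+\sum_{i\in N_2}\phi_i$.

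\textbf{Reduction.} Next I would reduce to a two-point extremal configuration. For $i\in N_1$, we have the pair $\theta(\phi_i+\psi)$ versus $\theta(\phi_i)$. For $i\in N_2$, we have $\theta(\phi_i+\psi)$ versus the common value $\theta(h)$, under the constraint $\sum_{N_2}\phi_i = n_2h-n\psi$ with $\phi_i\in[0,h]$. Since $\theta$ is convex, $\sum_{N_2}\theta(\phi_i+\psi)$ is maximized at the extremes $\phi_i\in\{0,h\}$. This is the analogue of the concavity-of-log step, now used in the opposite direction because we need an upper bound on the cost of $\sigma$. Likewise, for $N_1$ the ratio $\theta(\phi+\psi)/\theta(\phi)$ is nonincreasing-type and bounded using $\phi\ge h$. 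To get a single ratio of the shape of $\alpha$, I would aggregate by convexity:
\begin{itemize}
\item set $t := n_1'/n$, the fraction of machines that end up with a large solid job (those in $N_1$ plus those in $N_2$ pushed to $\phi=h$);
\item set $d:=\psi$;
\item interpret $r$ as the typical size of a large solid job.
\end{itemize}
Then $\sigma$ costs roughly $n[t\theta(r+d)+(1-t)\theta(d)]$. In the optimum the large jobs cost $t\theta(r)$ and the total liquid $n d$ is spread over the remaining $(1-t)n$ machines, each at level $d/(1-t)$. The condition $r\ge d/(1-t)$ is exactly that the large jobs stand above the water level $h$.

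\textbf{Main obstacle.} The hard step is making this aggregation rigorous when the $\phi_i$ in $N_1$ differ. The ratio being a supremum means we cannot simply average across machines. My plan is to show $\sum_{i}[\theta(\phi_i+\psi)-\alpha\,\theta(\max\{\phi_i,h\})]\le 0$ by first handling $N_2$ via the extreme-point argument, which converts it to $k$ machines at $\phi=h$ and the rest at $\phi=0$. Each machine with $\phi_i\ge h$ is then paired with a $1/n$-share of the zero-solid machines, proportionally, so each pair has the form $t\theta(r+d)+(1-t)\theta(d)\le\alpha[t\theta(r)+(1-t)\theta(d/(1-t))]$ with $r=\phi_i\ge h=d/(1-t)$ and a common $t$. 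Summing these pairs gives the result. Checking that the water level equals $d/(1-t)$ after the extreme-point move, i.e.\ that the liquid sits only on the zero-solid machines, needs care; I would verify it from $n_2h=n\psi+\sum_{N_2}\phi_i$.
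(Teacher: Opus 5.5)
Your proposal is correct and is essentially the paper's proof: your $t = n_1'/n$ equals $1-\psi/h$, which is exactly the paper's common $t$ (so $h = d/(1-t)$ holds by definition). Pairing each large machine with $(1-t)/t$ zero-solid machines (a $1/(tn)$ share, not $1/n$) is precisely the paper's regrouping with weight $\frac1t$ per machine of $N_1$ and weight $n-\frac{n_1}{t}\ge 0$ on the chord bound for $N_2$, with possibly fractional counts; your aside that $\theta(\phi+\psi)/\theta(\phi)$ is nonincreasing fails for general convex $\theta$, but it is never used and can be dropped.
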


\begin{proof}
    The notation used in the proof is independent of that used elsewhere. We shall use $M$ to denote the set of identical machines, $J$ to denote the set of jobs, and $p_j > 0$ to denote the size of a job $j \in J$.  We let $m = |M|$. 
    
    We use similar notations and perform the same liquidization operations as in Section~\ref{sec:EF1-gap}. $P_i := p(\sigma^{-1}(i))$ for every machine $i \in M$. $\psi := \min'_{i \in M} P_i$, and $\phi_i := P_i - \psi$. After the liquidization operations, each machine $i$ gets a \emph{solid job} of size $\phi_i$, and \emph{liquid jobs} of total size $\psi$. $\sigma$ remains EF1, and its cost does not change. Moreover, the instance only becomes easier and thus the optimum cost only decreases. So, it suffices to focus on the instance obtained after the liquidization operations.  The cost of $\sigma$ is $\sum_{i \in M}\theta(\phi_i + \psi)$. 

    We can assume $\psi > 0$; otherwise $\sigma$ is optimum and the theorem holds trivially. Let $h$ be the unique real such that $\frac{1}{m}\sum_{i \in M}\min\{\phi_i, h\} + \psi = h$. We have $\opt = \sum_{i \in M}\theta(\max\{\phi_i, h\})$. \medskip

    Similarly, we let $M_1 = \{i \in M: \phi_i > h\}$ and $M_2 = M \setminus M_1 = \{i \in M:\phi_i \leq h\}$. Let $m_1 = |M_1|$ and $m_2 = |M_2| = m - m_1$. So $h = \frac1{n}\sum_{i \in M} \min\{\phi_i, h\} + \psi  = \frac1{m}  \left(\sum_{i \in M_1} h + \sum_{i \in M_2}\phi_i\right) + \psi$, which is equivalent to $m_2 h = m\psi  + \sum_{i \in M_2} \phi_i$. 

    By convexity of $\theta$, and that $\phi_i \in [0, h]$ for every $i \in M_2$, we have 
    \begin{align*}
        \sum_{i\in M_2}\theta(\phi_i+\psi)\leq \frac{m_2h-m\psi}{h} \theta(h+\psi)+\frac{m\psi}{h} \theta(\psi) = \left(m_2 - \frac{m\psi}{h}\right)\cdot\theta(h+\psi) + \frac{m\psi}{h}\cdot\theta(\psi).
    \end{align*}

    Let $t=1-\frac\psi h$. Focus on a machine $i \in M_1$. We have $\phi_i> h = \psi/(1-t)$. By the definition of $\alpha$ and setting $d = \psi$, $r = \phi_i$, we have
    \begin{align*}
        t\theta(\phi_i+\psi)+(1-t)\theta(\psi) \leq \alpha \cdot \left(t\theta(\phi_i) + (1-t) \theta(h) \right).
    \end{align*}
    Setting $d = \psi$ and $r = h$, we have
    \begin{align*}
        t\theta(h+\psi)+(1-t)\theta(\psi) \leq \alpha\cdot\left( t\theta(h) + (1-t)\theta(h) \right) = \alpha \cdot \theta(h).
    \end{align*}

    Therefore
    
    \begin{flalign*}
       &&  \sum_{i\in M}\theta(\phi_i + \psi) &= \sum_{i\in M_1}\theta(\phi_i+\psi) + \sum_{i\in M_2}\theta(\phi_i+\psi) \\
       && &\leq \sum_{i\in M_1}\theta(\phi_i+\psi) +  \left(m_2 - \frac{m\psi}{h}\right)\cdot\theta(h+\psi) + \frac{m\psi}{h}\cdot\theta(\psi)\\
       && &= \sum_{i\in M_1}\theta(\phi_i+\psi) + (mt-m_1)\theta(h+\psi) + m(1-t)\theta(\psi) \\
       && &= \sum_{i\in M_1}\frac{1}{t}\big(t\theta(\phi_i+\psi)+(1-t)\theta(\psi)\big) + \left(m - \frac{m_1}{t}\right)\big( t\theta(h+\psi) + (1-t)\theta(\psi) \big) \\
       && &\leq \sum_{i\in M_1}\frac{\alpha}{t}\left( t\theta(\phi_i)+(1-t)\theta(h) \right) + \alpha\left(m - \frac{m_1}{t}\right)\theta(h) \\
       && &= \alpha\sum_{i \in M_1}\theta(\phi_i) + \alpha \left(\frac{m_1 (1-t)}{t} + m - \frac{m_1}{t}\right)\theta(h) \quad = \quad\alpha\sum_{i \in M_1}\theta(\phi_i) + \alpha m_2\theta(h) \\
        && &= \alpha\left(\sum_{i\in M_1}\theta(\phi_i)+\sum_{i\in M_2}\theta(h)\right) \quad = \quad \alpha\sum_{i \in M}\theta(\max\{\phi_i, h\}). &&\qedhere
    \end{flalign*}
\end{proof}

When the objective is the $L_k$ norm of machine loads, we obtain an $\alpha^{1/k}$-approximation algorithm, where
\begin{align*}
    \alpha &= \sup_{t\in(0,1); d\geq0; r\geq \frac{d}{1-t}}\frac{t(r+d)^k+(1-t)d^k}{tr^k+(1-t)(\frac{d}{1-t})^k} \\ 
    &= \sup_{t\in(0,1); y\in (0,1]}\frac{t(1+y(1-t))^k + (1-t)(y(1-t))^k}{t+(1-t)y^k}, \qquad \text{by setting } y = \frac{d}{(1-t)r}
\end{align*}
which is the same as the approximation ratio given by \cite{soda/ImLi23}.

\subsection{Weighted Completion Time Minimization with Uniform Smith Ratios}
In this section, we consider the unrelated machine weighted completion time minimization problem when jobs have uniform Smith ratios, as defined in Section~\ref{sec:problem-definitions}. Recall that the cost of an allocation $\rho:J \to M$ is 
\begin{align*}
    \frac12\sum_{i \in M}\left(\sum_{j \in \rho^{-1}(i)}p_{ij}^2 + \left(\sum_{j \in \rho^{-1}(i)}p_{ij}\right)^2\right).
\end{align*}

So the problem is closely related to the problem of minimizing the $L_2$ norm of machine loads, with the main difference being that we also incur a cost of $\sum_{j \in \rho^{-1}(i)}p_{ij}^2$ on machine $i$. The constant $\frac12$ does not change the problem. Also, we do not take a square root on the objective in the problem, and this will square the approximation ratio. 

We define $h_i(x_i)$ as in Section~\ref{sec:Lk-norm}. Treating the function $\theta$ as $\theta(t) = t^2$, we define
\begin{align*}
    f_i(x_i) := \sum_{j \in J: p_{ij} > h_i(x_i)}x_{ij}(p_{ij}^2 - h_i(x_i)^2) + h_i(x_i)^2.
\end{align*}
Then, the convex program becomes
        \begin{align}
            \min \qquad \frac12\sum_{i \in M} \left(f_i(x_i) + \sum_{j\in J}x_{ij}p_{ij}^2\right)
        \end{align}\vspace*{-15pt}
        \begin{align*}
            \sum_{i \in M} x_{ij} = 1,\forall j \in J; \qquad
            x_{ij} \geq 0,\forall i \in M, j \in J.
        \end{align*}


Once we solve the convex program to obtain a fractional allocation $x \in [0, 1]^{M \times J}$, we use the rounding algorithm of \cite{shmoys1993approximation} (described in Section~\ref{sec:rounding-framework}) to obtain an integral allocation $\rho:J \to M$ of jobs to machines. In the analysis, we fix a machine $i \in M$, and create a scheduling instance $\calI$ with $\Delta$ copies of $i$, and an EF1 allocation $\sigma$ of $\calI$.  Now all the machines are identical, the term $\frac12\sum_{i' \in [\Delta], j \in \rho^{-1}(j)} p_{ij}^2$ in the objective becomes a fixed term which is independent of the allocation. However, it will affect the approximation ratio and thus can not be removed from the objective. 

As before, for every machine $i' \in [\Delta]$, we liquidize every job in $\sigma^{-1}(i')$ except the largest one. After the operation, $\sigma$ remains EF1. One minor point is that the liquidization operations will reduce the fixed term in both the cost of $\sigma$, and in the optimum cost.  However, reducing the fixed term can only make the approximation ratio worse and thus this is not an issue.  Using the same analysis as before, it remains to prove the following theorem:

\begin{theorem}
    Consider a scheduling instance with identical machines. Then any EF1 allocation  $\sigma$ is $\alpha$-approximate, where $\alpha = \frac{\sqrt{2}+1}{2}$.
\end{theorem}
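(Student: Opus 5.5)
We follow the proof of Theorem~\ref{thm:EF1-gap-Lq}, using the same liquidization as in Section~\ref{sec:EF1-gap} and $\theta(t)=t^2$. The only new feature is the fixed term $\sum_j p_j^2$.

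\textbf{Setup after liquidization.} Let $P_i := p(\sigma^{-1}(i))$, $\psi := \min_{i} P_i$ and $\phi_i := P_i - \psi$. On every machine, liquidize all jobs except the largest one. Then split the largest job into a solid piece of size $\phi_i$ and a liquidized remainder; this is valid because EF1 gives $P_i - p_{\max} \le \psi$. Each machine now holds one solid job of size $\phi_i$ and liquid jobs of total size $\psi$. Liquid jobs contribute nothing to the fixed term, so it becomes $\sum_i \phi_i^2$.

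As the paper notes, liquidization does three things. It shrinks the fixed term in both costs, which only worsens the ratio and so is harmless. It keeps $\sigma$ EF1. It can only lower $\opt$. The exchange argument from before shows some optimum gives each machine at most one solid job, since the fixed term does not depend on the allocation. With $h$ defined by $\frac1m\sum_i\min\{\phi_i,h\}+\psi=h$ we therefore get
\[
2\,\mathrm{cost}(\sigma)=\sum_i\big((\phi_i+\psi)^2+\phi_i^2\big),\qquad 2\,\opt=\sum_i\big(\max\{\phi_i,h\}^2+\phi_i^2\big).
\]
It remains to show the difference
\[
\sum_i\Big[(\phi_i+\psi)^2+\phi_i^2-\alpha\big(\max\{\phi_i,h\}^2+\phi_i^2\big)\Big]
\]
is at most $0$.

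\textbf{Reduction to two elementary inequalities.} By scaling, set $h=1$ and $s:=\psi\in(0,1]$. Define $M_1=\{i:\phi_i>1\}$ and $M_2$ as its complement, with $m_1=|M_1|$ and $m_2=|M_2|$. The defining equation of $h$ gives $\sum_{i\in M_2}\phi_i=m_2-ms$. For $i\in M_2$, the summand $F(\phi)=(\phi+s)^2+\phi^2-\alpha(1+\phi^2)$ is convex in $\phi$. So for fixed $\sum_{M_2}\phi_i$, the sum over $M_2$ is maximized at the endpoints $\phi\in\{0,1\}$, which yields
\[
\sum_{M_2}F\le (m_2-ms)\,B+ms\,(s^2-\alpha),\qquad B:=(1+s)^2+1-2\alpha .
\]
Substitute $m=m_1+m_2$ and regroup by machine, as in the proof of Theorem~\ref{thm:EF1-gap-Lq}. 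The whole expression is then at most
\[
\sum_{i\in M_1}\Big[A(\phi_i)+s(s^2-\alpha)-sB\Big]+m_2\Big[s(s^2-\alpha)+(1-s)B\Big],\qquad A(\phi):=(\phi+s)^2+(1-2\alpha)\phi^2 .
\]
Hence it suffices to prove the following for all $s\in(0,1]$:
\begin{itemize}
\item[(i)] $s^3-\alpha s+(1-s)\big((1+s)^2+1-2\alpha\big)\le 0$;
\item[(ii)] $\max_{\phi\ge 1}A(\phi)+s^3-\alpha s-s\big((1+s)^2+1-2\alpha\big)\le 0$.
\end{itemize}

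\textbf{Main obstacle.} The remaining work is verifying (i) and (ii) with $\alpha=\frac{1+\sqrt2}{2}$, and this is where tightness must appear. Since $2-2\alpha<0$, $A$ is concave in $\phi$. Its stationary point is $\phi^*=s/(2\alpha-2)=s/(\sqrt2-1)=(\sqrt2+1)s$. So in (ii) I will split into two cases: $\phi^*\ge1$, where the maximum is at $\phi^*$, and $\phi^*<1$, where it is at $\phi=1$. Each case reduces to a cubic inequality in $s$ on an interval. I expect (i) and the $\phi=1$ case to be slack. The interior case should be tight, with equality at some $s$ that pins down $\alpha=\frac{1+\sqrt2}{2}$. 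If the direct endpoint grouping turns out too lossy, the fallback is to redistribute the $ms(s^2-\alpha)$ charge between $M_1$ and $M_2$ using a weight $t=1-s$, exactly as $\alpha$ was split in Theorem~\ref{thm:EF1-gap-Lq}. That amounts to verifying a single inequality of the form $t\,[\cdots]+(1-t)[\cdots]\le\alpha[\cdots]$.
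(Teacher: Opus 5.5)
Your setup, the convexity bound on $M_2$, and the regrouping are correct and match the paper. Condition (i) is also true: it simplifies to $-s^2+\alpha s-2(\alpha-1)\le\alpha^2/4-2(\alpha-1)<0$.

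\textbf{Condition (ii) is false, so the argument as structured does not close.} For $s\ge\sqrt2-1$ the maximum of $A$ over $\phi\ge1$ is $(2+\sqrt2)s^2$, and $s^3-\alpha s-sB=-(2-\alpha)s-2s^2$. So the left side of (ii) equals $s\bigl(\sqrt2\,s-\tfrac{3-\sqrt2}{2}\bigr)$. This is positive for $s>\tfrac{3\sqrt2-2}{4}\approx0.56$. At $s=1/\sqrt2$, exactly where the theorem is tight, it is about $0.146$. Such configurations do occur. Take $m_1=1$, $m_2=9$, $s=0.9$, all $M_2$ machines with $\phi_i=0$, and $\phi=(\sqrt2+1)s$ on the $M_1$ machine. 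So the $M_1$ bracket can be positive and must be paid for by the negative $M_2$ bracket. Your expectation that the interior case is tight is therefore wrong: it is violated.

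\textbf{What the per-machine split discards.} It drops the constraint $\sum_{i\in M_2}\phi_i=m_2-ms\ge0$, i.e.\ $b:=m_1/m\le 1-s$. Write $X$ and $Y$ for your two brackets. Your bound is then $m\,[bX+(1-b)Y]$. Demanding $X\le0$ and $Y\le0$ amounts to checking $b=0$ and the infeasible endpoint $b=1$.

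\textbf{The repair is your fallback, and it is exactly the paper's argument.} The bound is linear in $b\in[0,1-s]$, so it suffices to check $b=0$ (your (i)) and $b=1-s$, where
\[
(1-s)X+sY\le(2+\sqrt2)s^2(1-s)+s^3-\alpha s=-\tfrac{s}{\sqrt2-1}\bigl(s-\tfrac{1}{\sqrt2}\bigr)^2\le0 .
\]
In your $t=1-s$ reweighting, this display is the $M_1$-type inequality $tA(\phi)+(1-t)(s^2-\alpha)\le0$, and the $M_2$-type inequality is (i). The regrouping is legitimate only because the weight $m-m_1/t$ on the $M_2$-type units is nonnegative, which is precisely $m_1\le(1-s)m$. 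The proposal never identifies this constraint and commits to (ii) as the sufficient condition. As written it has a gap, though the fix is short once the constraint is used.
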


\begin{proof}
    Again, we use $M, m, J, n$ and $(p_j)_{j \in J}$ to denote the scheduling instance as in the proof of Theorem~\ref{thm:EF1-gap-Lq}. The notations in the proof are independent of that used elsewhere. We perform the same liquidization operations. Again, the operations can only decrease the fixed term and make the approximation ratio of $\sigma$ worse.  

    Following the same notations as in the proof of Theorem~\ref{thm:EF1-gap-Lq}, we have that the cost of $\sigma$ is 
    \begin{align*}
        \cost(\sigma):=\frac12\sum_{i \in M}\big((\phi_i + \psi)^2 + \phi_i^2\big).    
    \end{align*}
     We assume $\psi > 0$. Let $h$ be the unique real such that $\frac1m \sum_{i \in M}\min\{\phi_i, h\} + \psi = h$. We have 
     \begin{align*}
        \opt = \frac12\sum_{i\in M}\big(\max\{\phi_i, h\}^2 + \phi_i^2\big).         
     \end{align*}

    Similarly, we let $M_1 = \{i \in M: \phi_i > h\}$ and $M_2 = M \setminus M_1 = \{i \in M:\phi_i \leq h\}$. Let $m_1 = |M_1|$ and $m_2 = |M_2| = m - m_1$. So $h = \frac1{n}\sum_{i \in M} \min\{\phi_i, h\} + \psi  = \frac1{m}  \left(\sum_{i \in M_1} h + \sum_{i \in M_2}\phi_i\right) + \psi$, which is equivalent to $m_2 h = m\psi  + \sum_{i \in M_2} \phi_i$.

    Then, we consider
    \begin{align*}
        2(\cost(\sigma) - \alpha \cdot \opt) &= \sum_{i \in M}((\phi_i + \psi)^2 + \phi_i^2) - \alpha\sum_{i \in M_1}2\phi_i^2 - \alpha\sum_{i \in M_2}(h^2 + \phi_i^2)\\
        &= \sum_{i \in M_1}((\phi_i + \psi)^2 - (2\alpha-1)\phi_i^2) + \sum_{i \in M_2} ((\phi_i + \psi)^2- (\alpha-1)\phi_i^2)  - \alpha  m_2 h^2.
    \end{align*}

    For $i \in M_1$, we have 
    \begin{align*}
        (\phi_i + \psi)^2 - (2\alpha-1) \phi_i^2 = -2(\alpha-1)\phi_i^2 + 2\psi\phi_i + \psi^2 \leq \psi^2 + \frac{(2\psi)^2}{4\cdot 2(\alpha-1)} = \frac{2\alpha-1}{2(\alpha-1)}\cdot \psi^2.    
    \end{align*}
    Since the function $(x+\psi)^2-(\alpha - 1) x^2$ is convex, $\sum_{i \in M_2}\phi_i = m_2h - m\psi$ and $\phi_i \in [0, h]$ for every $i \in M_2$, we have 
    \begin{align*}
        \sum_{i\in M_2}\left((\phi_i+\psi)^2-(\alpha-1)\phi_i^2\right) \leq \left(m_2 - \frac{m\psi}{h}\right)((h+\psi)^2-(\alpha - 1) h^2)+\frac{m\psi}{h}\psi^2.
    \end{align*}
    
    Therefore,
    \begin{align*}
        2(\cost(\sigma) - \alpha \cdot \opt) &\leq \frac{2\alpha-1}{2(\alpha-1)} \cdot m_1 \psi^2 + \left(m_2 - \frac{m\psi}{h}\right)((h+\psi)^2-(\alpha - 1) h^2)+\frac{m\psi}{h}\psi^2 - \alpha m_2h^2.
    \end{align*}

    Now, we let $a = \frac\psi h \in [0, 1]$ and $b = \frac{m_1}{m} \in [0, 1]$. Notice that $m_2 h \geq m\psi$, which implies $ a = \frac{\psi}{h} \leq \frac{m_2}{m} = 1 - b$. Then, we have 
    \begin{align}
        \frac{2(\cost(\sigma) - \alpha \cdot \opt)}{mh^2} &\leq
        \frac{2\alpha-1}{2(\alpha-1)} a^2 b + \left(1-b - a\right)((1+a)^2-(\alpha - 1))+a^3 - \alpha (1-b). \label{equ:bound}
    \end{align}

    We calculate the maximum of the right side of \eqref{equ:bound} subject to $a \geq 0, b \geq 0, a + b \leq 1$. The quantity is a linear function of $b$ for fixed $a$. Therefore, the maximum is achieved when $b = 0$ or $b = 1 - a$.

When $b = 0$, the right side of \eqref{equ:bound} becomes
\begin{align*}
    &\quad \left(1 - a\right)((1+a)^2-(\alpha - 1))+a^3 - \alpha = (2- a^2) - (2-a)\alpha = -a^2 + \alpha a - (2(\alpha-1)) \\
    &\leq -2(\alpha-1) + \frac{\alpha^2}{4} = \frac{3 + 2\sqrt{2}}{16} - (\sqrt{2}-1) = \frac{19 - 14\sqrt{2}}{16} < 0.
\end{align*}

When $b = 1 - a$, the right side of \eqref{equ:bound} becomes
\begin{align*}
    &\quad \frac{2\alpha-1}{2(\alpha-1)} a^2 (1-a) +a^3 - \alpha a = -\frac{1}{2(\alpha-1)}\cdot a^3 + \frac{2\alpha-1}{2(\alpha-1)}\cdot a^2 - \alpha a \\
    &= \frac{a}{2(\alpha-1)}\big(-a^2 + (2\alpha-1)a - 2\alpha(\alpha-1)\big)
\end{align*}
Then,
\begin{align*}
    -a^2 + (2\alpha-1)a - 2\alpha(\alpha-1) &\leq -2\alpha(\alpha-1) + \frac{(2\alpha-1)^2}{4} = -\alpha^2 +\alpha + \frac14 = 0,
\end{align*}
as $\alpha = \frac{\sqrt{2}+1}{2}$ is a root of the equation $\alpha^2 - \alpha - \frac14 = 0$.

Therefore, the right side of \eqref{equ:bound} is at most $0$, which implies $\cost(\sigma) - \alpha \cdot \opt \leq 0$. This finishes the proof of the theorem. 
\end{proof}

\bibliographystyle{plain}
\bibliography{reflist}

\appendix
\section{Solving Convex Programs CP\eqref{CP:theta} Approximately}

\label{appendix:lp-solver}

Convex program can be solved within an additive error of $\epsilon$ for any given $\epsilon > 0$ under some mild requirements. For our convex program CP\eqref{CP:theta}, we can convert it to a linear program using the discretization technique. Due to the existence of the function $\theta$, it is convenient for us to impose approximation on job sizes. 

For every $i\in N$, we let $\ell_i:= \min\{p_{ij} : j\in J\}$ and $r_i:=\sum_{j\in J}p_{ij}$, then we have $h_i(x_i)=\{0\}\cup [\ell_i, r_i]$. Let $H_i = \{0\}\cup \{\ell_i(1+\epsilon)^{t}\le r_i : t\in\Z_{\geq 0}\}$ and we use $\bar{f}_i(x_i) := \max_{h\in H_i}g(x_i, h)$ as an approximation of $f_i(x_i)$. Then, we can use functions $\bar f_i$'s to replace $f_i$'s in CP\eqref{CP:theta}, and the new convex program can be explicitly formulated as a polynomial-size linear program. 

We define $f_i\left(x_i; \frac{1}{1+\epsilon}\right)$ in the same way as we define $f_i(x_i)$, except that we use the processing times $\frac{p_{ij}}{1+\epsilon}$ for all $j \in J$. The main lemma we prove is
\begin{lemma}
    $f_i\left(x_i; \frac{1}{1+\epsilon}\right) \leq \bar{f}_i(x_i) \leq f_i(x_i)$ for every $x_i \in [0, 1]^J$.
\end{lemma}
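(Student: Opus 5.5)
The upper bound follows because $\bar f_i$ maximizes $g_i(x_i,\cdot)$ over a subset of the values allowed for $f_i$. For the lower bound, I would compare $g_i$ at a grid point just below $h_i(x_i)$ with the scaled function at $h_i(x_i)/(1+\epsilon)$, using monotonicity of $g_i$ in $h$ and monotonicity of $g_i$ in the job sizes.

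\textbf{Upper bound $\bar f_i(x_i)\le f_i(x_i)$.} By the preceding lemma, $f_i(x_i)=\max_{h>0} g_i(x_i,h)$, and $\bar f_i(x_i)$ is a maximum of $g_i(x_i,\cdot)$ over $H_i$. The only point of $H_i$ not covered by $h>0$ is $h=0$. The value $g_i(x_i,0)$ equals $\lim_{h\to 0^+}g_i(x_i,h)$ by continuity of $\theta$ and $\theta'$, so it is also at most $\sup_{h>0}g_i(x_i,h)=f_i(x_i)$.

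\textbf{Lower bound, step 1 (scaling).} Write $h:=h_i(x_i)$. The defining equation $\sum_j \min\{p_{ij},h\}x_{ij}=h$ is homogeneous, so with sizes $p_{ij}/(1+\epsilon)$ the water level becomes $h':=h/(1+\epsilon)$. This also holds in the degenerate case $|x_i|_1\le 1$, where $h=h'=0$. Let $g_i'$ denote $g_i$ with the scaled sizes. Applying the preceding lemma to the scaled instance gives
\[
f_i\left(x_i;\tfrac{1}{1+\epsilon}\right)=g_i'(x_i,h').
\]

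\textbf{Lower bound, step 2 (grid point).} If $h=0$, then $f_i(x_i;\frac1{1+\epsilon})=g_i'(x_i,0)\le g_i(x_i,0)\le\bar f_i(x_i)$, using the size-monotonicity from step 4. Otherwise $h\in[\ell_i,r_i]$. Choose $\bar h$ to be the largest element of $H_i\setminus\{0\}$ with $\bar h\le h$; it exists because $\ell_i\in H_i$. Since $\ell_i(1+\epsilon)^t$ grows by factors of $1+\epsilon$, we get $h'<\bar h\le h$.

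\textbf{Lower bound, step 3 (monotonicity in $h$).} From the derivative computed in the previous lemma,
\[
\frac{\partial g_i}{\partial h}=\theta''(h)\Big(\sum_j\min\{p_{ij},h\}x_{ij}-h\Big)\ge 0 \quad\text{for } h<h_i(x_i).
\]
So $g_i(x_i,\cdot)$ is nondecreasing on $(0,h]$, which gives $\bar f_i(x_i)\ge g_i(x_i,\bar h)\ge g_i(x_i,h')$.

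\textbf{Lower bound, step 4 (monotonicity in sizes).} It remains to show $g_i(x_i,c)\ge g_i'(x_i,c)$ at the fixed point $c=h'$. I would rewrite
\[
g_i(x_i,c)=\theta(c)-c\,\theta'(c)+\sum_j x_{ij}\,\gamma_c(p_{ij}),
\]
where $\gamma_c(p)=\theta(p)-\theta(c)+c\,\theta'(c)$ for $p>c$ and $\gamma_c(p)=\theta'(c)\,p$ for $p\le c$. The function $\gamma_c$ is continuous at $p=c$. It is nondecreasing because $\theta$ is increasing and $\theta'\ge0$. Shrinking every $p_{ij}$ to $p_{ij}/(1+\epsilon)$ therefore cannot increase any term, which yields $g_i(x_i,c)\ge g_i'(x_i,c)$. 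Chaining steps 1 through 4 gives $\bar f_i(x_i)\ge f_i(x_i;\frac1{1+\epsilon})$.

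The main thing to get right is step 4: checking continuity of $\gamma_c$ at $p=c$ so the termwise monotonicity is valid. The other delicate point is the edge case $h=0$, including the convention at $h=0$ in the upper bound.
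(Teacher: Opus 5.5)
Your proof is correct and follows the paper's skeleton. You take $\bar h$ to be the largest grid point with $\bar h\le h_i(x_i)$ (so $h_i(x_i)/(1+\epsilon)<\bar h$ when $h_i(x_i)>0$) and show $f_i(x_i;\frac{1}{1+\epsilon})\le g_i(x_i,\bar h)\le \bar f_i(x_i)$. Where you differ is the middle inequality.

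The paper proves it in one termwise chain. For $p_{ij}>h_i$, convexity and monotonicity give $\theta(\frac{p_{ij}}{1+\epsilon})-\theta(\frac{h_i}{1+\epsilon})\le\theta(p_{ij})-\theta(h_i)\le\theta(p_{ij})-\theta(\bar h)$. Then $\theta(\frac{h_i}{1+\epsilon})\le\theta(\bar h)$. Finally, the term $\theta'(\bar h)\bigl(\sum_j\min\{p_{ij},\bar h\}x_{ij}-\bar h\bigr)$ is nonnegative because $\bar h\le h_i$.

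You instead pass through the intermediate value $g_i(x_i,h_i/(1+\epsilon))$ and split the comparison into two monotonicity facts. The first is monotonicity of $g_i$ in the job sizes at a fixed level; your $\gamma_c$ rewriting shows this needs only that $\theta$ is increasing. The second is monotonicity of $g_i(x_i,\cdot)$ on $(0,h_i]$, which reuses the derivative formula from the preceding lemma; this is where convexity enters, via $\theta''\ge 0$.

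The paper's chain is shorter and uses only convexity of $\theta$, not its second derivative. Your version is more modular, and it handles two details the paper leaves implicit:
\begin{enumerate}
\item the jobs with $\bar h<p_{ij}\le h_i$, which appear in the sum defining $g_i(x_i,\bar h)$ but not in $f_i$ (they contribute nonnegatively);
\item the degenerate case $h_i(x_i)=0$, where the paper's strict inequality $h_i/(1+\epsilon)<\bar h$ fails.
\end{enumerate}
Your continuity argument for the grid point $h=0$ in the upper bound is also more careful than the paper's ``trivially''.
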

    \begin{proof}
        $\bar{f}_i(x_i) \leq f_i(x_i)$ holds trivially. So we only need to prove the first inequality. For notational convenience, we use $h_i$ to denote $h_i(x_i)$. Let $\bar h$ be the largest number in $H_i$ with $\bar h \leq h_i$. Then, we have $\frac{h_i}{1+\epsilon} < \bar h \leq h_i$. 
        \begin{align*}
            f_i\left(x_i; \frac{1}{1+\epsilon}\right)&=\sum_{j\in J: p_{ij} > {h_i}} x_{ij} \left(\theta\Big(\frac{p_{ij}}{1+\epsilon}\Big)-\theta\Big(\frac{h_i}{1+\epsilon}\Big)\right) + \theta\Big(\frac{h_i}{1+\epsilon}\Big)\\
            &\leq \sum_{j\in J:p_{ij}>h}x_{ij}\big(\theta(p_{ij}) - \theta(\bar h)\big) + \theta(\bar h) + \theta'(\bar h)\left(\sum_{j\in J}\min\{p_{ij}, \bar h\}x_{ij}-\bar h \right)\\
            &= g_i(x_i, \bar h) \leq \bar f_i(x_i).
        \end{align*}

       We explain the inequality. $\theta\Big(\frac{p_{ij}}{1+\epsilon}\Big)-\theta\Big(\frac{h_i}{1+\epsilon}\Big) \leq \theta(p_{ij}) - \theta(h_i) \leq \theta(p_{ij}) - \theta(\bar h)$. This holds as $\theta$ is monotone and convex. Also, $\theta\Big(\frac{h_i}{1+\epsilon}\Big) \leq \theta(\bar h)$ as $\frac{h_i}{1+\epsilon} < \bar h$. Finally, 
       $\theta'(\bar h)\left(\sum_{j\in J}\min\{p_{ij}, \bar h\}x_{ij}-\bar h \right) \geq 0$ as $\theta'(\bar h) \geq 0$ and $\bar h \leq h_i$. 
    \end{proof}

    For functions $\theta$ with moderate growth (including the case where $\theta(x) = x^k$ for a constant $k$), the lemma suffices to solve CP\eqref{CP:theta} up to arbitrary precision.

\end{document}